\pdfoutput=1

\documentclass[journal]{IEEEtran}

\usepackage{graphicx}
\usepackage{tikz}
\usetikzlibrary{decorations.pathreplacing, shapes.geometric}
\usepackage{balance}
\usepackage[cmex10]{amsmath}
\usepackage{amsfonts}
\usepackage{xcolor}
\usepackage{algorithm}
\usepackage{algorithmic}
\usepackage{comment}
\usepackage{booktabs}
\usepackage{tabularx}
\usepackage{amsmath, amssymb, amsthm}
\usepackage[caption=false,font=footnotesize]{subfig}

\usepackage{cite}

\newtheorem{theorem}{Theorem}
\newtheorem{lemma}{Lemma}
\newtheorem{corollary}{Corollary}
\newtheorem{remark}{Remark}

\usepackage{microtype}

\begin{document}
\bstctlcite{IEEEexample:BSTcontrol}   

\title{Pinching-Antenna Differential Spatial Modulation with RSSI-Assisted Selection and Low-Complexity Detection: Methods and Performance Analysis} 

\author{Yusuf~Akar,~\IEEEmembership{Student Member,~IEEE,} 
        Mahmoud~Raeisi,~\IEEEmembership{Member,~IEEE,} 
        Henk Wymeersch,~\IEEEmembership{Fellow,~IEEE,}
        Mikko~Valkama,~\IEEEmembership{Fellow,~IEEE,}
        and~Ertugrul~Basar,~\IEEEmembership{Fellow,~IEEE \vspace{-0.5cm}}
\thanks{A preliminary version of this paper was presented at the International Conference on Telecommunications (ICT), Thessaloniki, Greece, May 2026 \cite{11594671}. \looseness=-1}%
\thanks{Y. Akar, M. Valkama, and E. Basar are with the Tampere Wireless Research Centre, Department of Electrical Engineering, Tampere University, 33720 Tampere, Finland (e-mail: yusuf.akar@tuni.fi; mikko.valkama@tuni.fi; ertugrul.basar@tuni.fi).}
\thanks{M. Raeisi and H. Wymeersch are with the Department of Electrical Engineering,
Chalmers University of Technology, Gothenburg, Sweden (e-mail: raeisi@chalmers.se; henkw@chalmers.se).}%

\vspace{-1em}
}

\maketitle

\begin{abstract}
Pinching antenna (PA) systems provide macroscopic spatial reconfigurability by enabling signal radiation from favorable locations along a dielectric waveguide. Exploiting this flexibility, however, typically requires channel knowledge for PA configuration, which can introduce substantial overhead in dynamic propagation environments. This paper proposes PA-assisted differential spatial modulation (PA-DSM), enabling data transfer with spatial index selection and non-coherent data detection without instantaneous channel state information (CSI) at the receiver. To exploit PA reconfigurability while preserving this CSI-efficient operation, we introduce a slow-timescale received signal strength indicator (RSSI)-assisted selection strategy that identifies a favorable subset of PAs from scalar power measurements. We further specialize a two-stage low-complexity maximum-likelihood (LC-ML) detector to phase shift keying (PSK) signaling, obtaining closed-form symbol decisions that remove the joint modulation-combination search while preserving the minimizer of exhaustive differential detection. Furthermore, a moment generating function (MGF)-based pairwise error analysis and the corresponding bit error probability (BEP) bound are developed under normalized common-$K$ Rician fading, revealing how the difference rank and spectrum, number of receive antennas, and previous-state-dependent line-of-sight (LoS) alignment affect the error performance. Numerical results across representative indoor-office, indoor-factory, and street-canyon sixth-generation (6G) Frequency Range 3 (FR3) scenarios validate the analytical trends and quantify when PA-DSM outperforms coherent benchmark schemes whose channel estimates age between pilot transmissions.
\end{abstract}

\begin{IEEEkeywords}
Pinching antenna systems (PASS), differential spatial modulation, RSSI-assisted selection, low-complexity detection.
\end{IEEEkeywords}

\IEEEpeerreviewmaketitle
\vspace{-1em}
\section{Introduction}

\IEEEPARstart{T}{he} forthcoming sixth-generation (6G) wireless networks are envisioned to support high data rates, massive connectivity, and highly reliable low-latency services \cite{11456641}. Meeting these requirements calls for physical layer techniques that efficiently exploit the available spatial, spectral, and propagation-domain degrees of freedom. Multiple-input multiple-output (MIMO) systems improve spectral efficiency and link reliability through spatial multiplexing and diversity \cite{7448967}, while techniques such as non-orthogonal multiple access (NOMA) \cite{10659349} and noise modulation \cite{10373568, 11032161} introduce additional dimensions for resource sharing and information transmission. More recently, attention has shifted toward architectures that make the effective wireless channel more controllable through programmable propagation or antenna-position adaptation \cite{8796365, 10318061, 9264694}.

Within this paradigm, pinching antenna (PA) systems provide macroscopic spatial reconfigurability by enabling signal radiation from selected locations along a dielectric waveguide \cite{NTTDOCOMO}. This capability is attractive for Frequency Range 3 (FR3) deployments, which combine sub-6-GHz-like coverage with millimeter-wave capacity potential but remain affected by propagation loss, blockage, and channel variation \cite{10605910, 11327450}. These characteristics make PA systems a promising platform for advanced transmission techniques that jointly exploit signal and spatial domains.

\vspace{-0.5em}
\subsection{Related Works}

Existing research on PA systems (PASS) can be broadly grouped into foundational channel modeling and architecture studies and application-oriented PA-assisted communication designs. At the architectural level, recent studies have characterized the potential spectral- and energy-efficiency tradeoffs of PAs relative to reconfigurable intelligent surfaces (RIS)-based deployments in certain mmWave scenarios \cite{11172334}. Channel modeling and transceiver design have also been investigated by jointly accounting for deterministic in-waveguide propagation and stochastic wireless fading \cite{11368709}, providing a physical layer foundation for subsequent PA-assisted communication schemes.

Building on these foundations, recent studies have integrated PAs into optimization-based system designs, multiuser transmission, and MIMO architectures. Representative examples include RIS-assisted multi-waveguide PASS with joint PA-position and RIS-phase optimization \cite{he2025risassisteddownlinkpinchingantennasystems}, PA-assisted partial NOMA \cite{11204499}, and MIMO-PASS architectures with iterative hybrid beamforming \cite{11414143}. Other application-oriented directions include PA-assisted integrated sensing and communication (ISAC) and physical-layer security, with designs addressing dual-functional radar communication \cite{11303890}, secure directional and index modulation (IM) \cite{11205176}, and IM-enabled localization and data transmission \cite{11314615}.

Among these directions, PA-assisted IM is most closely related to the present work because PA reconfigurability naturally creates distinguishable spatial index states. IM conveys information through the indices of communication resources in addition to conventional modulation symbols, while spatial modulation (SM) specifically maps information to the active transmit antenna indices \cite{8004416}. SM is well suited to series-fed PASS because single-PA activation avoids the non-uniform power distribution associated with simultaneous radiation and simplifies the RF implementation. 
Existing PA-assisted IM research has formulated composite waveguide--wireless channel models and developed low-complexity (LC) sphere decoding methods for coherent detection \cite{11368709}.

Most existing PA-assisted IM/SM frameworks employ coherent detection and therefore require instantaneous channel state information (CSI) at the receiver to recover the active spatial index and transmitted symbols. Maintaining sufficiently accurate CSI may require frequent pilot transmission in deployments affected by mobility, blockage, or channel variation. This consideration is particularly relevant in FR3 environments, where measurement studies report non-negligible multipath and non-line-of-sight (NLoS) components across indoor, industrial, and outdoor scenarios \cite{10605910,11160744,11161884}. Differential SM (DSM) provides an alternative by encoding information in the transitions between consecutive transmission blocks, thereby enabling non-coherent data detection without explicit instantaneous CSI acquisition \cite{8004416}.  

However, exploiting DSM in a single-waveguide PA system introduces two coupled challenges: selecting a favorable PA subset without estimating the complete complex channel matrix and managing the joint search over spatial permutations and modulation symbols. Unlike the fluid antenna-assisted differential IM scheme in \cite{11184847} and the concurrent Alamouti-based dual-waveguide DSM design in \cite{Tao_DSM_PA}, this work considers single-waveguide, single-RF-chain PA-DSM with RSSI-assisted PA selection and ML-equivalent LC detection.
\vspace{-0.3cm}
\subsection{Motivation and Contributions}

Addressing the identified research gap requires resolving two coupled design challenges. First, the macroscopic reconfigurability of a PA waveguide must be exploited without estimating the complete instantaneous complex channel between all candidate PAs and the receiver. Second, the conventional DSM receiver jointly searches over $Q$ spatial permutations and $M^{N_t}$ modulation-symbol combinations, where $M$ is the modulation order and $N_t$ is the number of selected PAs used for data transfer with spatial index selection. The resulting $QM^{N_t}$ hypotheses limit the scalability of exhaustive detection.

Motivated by these challenges, we propose a novel PA-DSM system that exploits the macroscopic spatial reconfigurability of PAs while retaining the non-coherent detection capability of differential signaling. An uplink beacon provides scalar RSSI measurements for selecting a favorable PA subset, which is reused across multiple PA-DSM blocks and updated only over the large-scale coherence timescale. This avoids per-block selection overhead and instantaneous complex CSI acquisition. At the receiver, a pre-calculated-symbol-metric LC-ML detector separates the modulation-symbol and spatial-permutation searches without omitting any candidate hypotheses and therefore yields the same decision as exhaustive detection. The resulting system supports long-term macroscopic PA adaptation and non-coherent data detection with reduced receiver complexity. \looseness=-1

The main contributions of this paper are summarized as follows:
\begin{itemize}
\item \textit{Single-Waveguide PA-DSM Architecture}: We develop a single-user PA-DSM model that incorporates in-waveguide propagation, composite wireless fading, spatial-permutation signaling, and differential encoding. The transmission structure activates one PA per time slot and enables non-coherent data detection without instantaneous receiver CSI. 

\item \textit{RSSI-Assisted PA Selection}: Scalar RSSI measurements from an uplink probing beacon select the $N_t$ favorable candidate PAs over the large-scale coherence timescale. The subset is reused across multiple PA-DSM blocks and updated only when the large-scale propagation conditions change appreciably, avoiding per-block selection and instantaneous complex CSI estimation.

\item \textit{PSK-Specialized Low-Complexity ML Detection}: Building on \cite{10855589}, we derive closed-form PSK symbol decisions for the column-pair metrics. The resulting detector replaces the $M^{N_t}$ joint modulation search with $N_t^2$ phase quantizations and evaluation of the $Q$ admissible permutations, while preserving the minimizer of the exhaustive differential metric.

\item \textit{Analytical and Numerical Characterization}: We derive a moment generating function (MGF)-based pairwise error characterization and a bit error probability (BEP) union bound under normalized common-$K$ Rician fading. The analysis separates the scattered and deterministic LoS contributions, showing that the previous differential state preserves the pairwise rank and eigenvalues but can change the LoS contribution through spatial alignment. It further identifies the rank-dependent polynomial signal-to-noise ratio (SNR) exponent, codebook minimum-rank behavior, and pairwise coding coefficient. A complexity analysis further quantifies the resulting saving, which removes the exponential $M^{N_t}$ scaling of exhaustive detection.

\end{itemize}

\textit{Notation:} Bold lowercase and uppercase letters denote vectors and matrices, and $[\mathbf{A}]_{a,b}$ is the $(a,b)$th entry of $\mathbf{A}$. The operators $(\cdot)^{\mathsf T}$, $(\cdot)^{H}$, $(\cdot)^{*}$, $\operatorname{Tr}(\cdot)$, $\Re\{\cdot\}$, $\mathbb{E}[\cdot]$, $\operatorname{Var}[\cdot]$, $\lfloor\cdot\rfloor$, and $\mathcal{O}(\cdot)$ denote transpose, Hermitian transpose, conjugate, trace, real part, expectation, variance, floor, and asymptotic order, respectively, while $\|\cdot\|$, $\|\cdot\|_F$, and $\odot$ denote the Euclidean norm, Frobenius norm, and Hadamard product. Here, $\operatorname{diag}(\mathbf{a})$ is a diagonal matrix formed from $\mathbf{a}$, $\mathbf{I}_N$ and $\mathbf{1}_{M\times N}$ are the identity and all-ones matrices, $\mathbb{C}^{M\times N}$ and $\mathbb{R}^{M\times N}$ are the sets of complex and real $M\times N$ matrices, and $\mathcal{N}(\mu,\sigma^2)$ and $\mathcal{CN}(\mu,\sigma^2)$ are real and circularly symmetric complex Gaussian distributions. Finally, $\mathcal{S}$ and $\mathcal{P}$ denote the PSK alphabet and permutation codebook, and $P_{\mathrm a}(\cdot)$ denotes the pairwise error probability (PEP).

\vspace{-0.3cm}
\section{System and Channel Model}
\label{sec:system_model}
\vspace{-0.1cm}
This section presents the proposed PA-DSM system architecture, characterizes the underlying propagation channel, and establishes the corresponding signal model.
\vspace{-0.3cm}
\subsection{System Model}
\vspace{-0.5em}
The considered downlink communication architecture features a base station (BS) equipped with a single radio frequency (RF) chain feeding a longitudinal dielectric waveguide elevated at $z_{wg}$. To enable macroscopically flexible and discrete spatial signal radiation, the waveguide is densely populated with $N_{\text{all}}$ electronically switchable PA tap points distributed at uniform intervals \cite{11202577}.\footnote{In practice, electronically controlled tap points may be implemented using active switching or actuating elements that modify the local boundary conditions, as considered in leaky-coaxial-cable architectures \cite{11657464}.} 

\begin{figure}[!t]
    \centering
    \includegraphics[scale = 0.48]{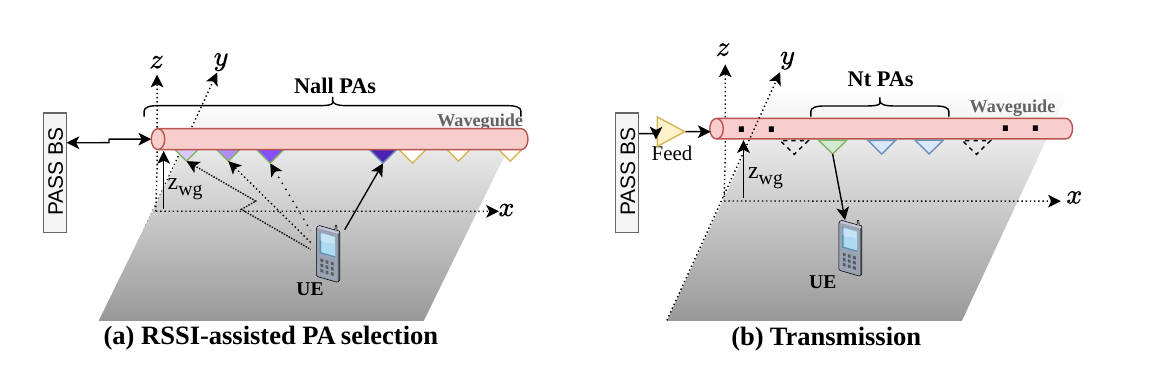}
    \caption{Two-stage operation of the proposed PA-DSM architecture. (a) The UE transmits an uplink probing beacon and the BS selects the $N_t$ PAs with the strongest measured RSSI values from $N_{\rm all}$ candidate locations. (b) During PA-DSM transmission, the selected PAs are activated sequentially according to the spatial-permutation index, with only one PA radiating per time slot.}
    \label{fig:system_model}
    
\end{figure}

The proposed architecture selects $N_t$ PAs from $N_{\rm all}$ candidate locations. Before PA-DSM transmission, a designated UE antenna repeatedly transmits an unmodulated beacon in the TDD uplink, while the BS sequentially connects its RF chain to each candidate PA and measures the received power through the corresponding PA--waveguide path. Following \cite{Zhang2025PASS_RSSI}, candidate $n$ is characterized by $P_{r,n}\triangleq L_p^{-1}\sum_{\ell=1}^{L_p}|r_n^{p}[\ell]|^2$, where $r_n^{p}[\ell]$ is the $\ell$th of $L_p$ received beacon samples averaged to suppress small-scale fading, and $L_p=4$ throughout. The active set $\mathcal A_{\rm active}$ contains the indices of the $N_t$ largest RSSI values. This strength-oriented rule avoids instantaneous complex CSI but does not explicitly optimize the pairwise separation among PA-DSM hypotheses. A complete sweep requires $N_{\mathrm{all}}L_p$ beacon samples, and the selected subset is reused across multiple PA-DSM blocks to amortize this probing cost.

Whenever the selected subset changes, differential encoding is restarted from $\mathbf{S}_0=\mathbf{I}_{N_t}$, so each update costs one additional reference block. The relative cost of this probing follows from the separation between the two adaptation timescales. A coherent receiver expends $N_t$ pilot symbols per pilot interval $T_p$, whereas the proposed scheme expends $N_{\mathrm{all}}L_p$ probing samples and one $N_t$-slot reference block per large-scale decorrelation interval $T_{\mathrm{ls}}=d_{\mathrm{dec}}/v$, where $d_{\mathrm{dec}}$ denotes an assumed shadowing decorrelation distance, and $v$ is the UE speed. The selection overhead relative to coherent pilot overhead is therefore $\big[(N_{\mathrm{all}}L_p+N_t)/N_t\big (T_p/T_{\mathrm{ls}})$, which for the parameters of Table~\ref{tab:sim_params} evaluates to $0.51\%$ in the street-canyon case ($N_{\mathrm{all}}=100$) and $0.085\%$ indoors ($N_{\mathrm{all}}=16$).\looseness=-1

Following PA selection, the $N_t$ selected PAs are used for data transfer with spatial index selection, with only one PA radiating at each time instant. Their activation order across each PA-DSM transmission block is determined by the spatial index bits, as detailed in Section~\ref{subsec:signal_model}. The UE is equipped with $N_r$ receive antennas and located at $\mathbf{u}=[u_x,u_y,u_z]^{\mathsf{T}}$ relative to the waveguide feed point at $\mathbf{u}_f=[0,D_y/2,z_{\rm wg}]^{\mathsf{T}}$. The resulting PA-DSM architecture is illustrated in Fig.~\ref{fig:system_model}.

\vspace{-0.1cm}
\subsection{Channel Model}
\label{subsec:channel_model}

The PA-assisted channel incorporates deterministic in-waveguide propagation and Rician wireless fading. Let $\mathbf{H} \in \mathbb{C}^{N_r \times N_t}$ denote the wireless channel matrix between the $N_t$ active PAs and the $N_r$ receive antennas. The channel coefficient between the $j$th active PA and the $i$th receive antenna is modeled as \cite{11368709}
\setlength{\abovedisplayskip}{4pt}
\setlength{\belowdisplayskip}{4pt}
\begin{equation}
    h_{i,j} = \sqrt{\beta_{i,j}} \left( \sqrt{\frac{K_{i,j}}{K_{i,j}+1}} h^{\mathrm{LoS}}_{i,j} + \sqrt{\frac{1}{K_{i,j}+1}} h^{\mathrm{NLoS}}_{i,j} \right) \gamma_{j}.
\end{equation}
Here, $\beta_{i,j}$ denotes the large-scale channel gain incorporating distance-dependent path loss and shadowing, while $K_{i,j}$ is the Rician $K$-factor representing the power ratio between the LoS and NLoS components. For a blocked LoS link, $K_{i,j}=0$, reducing the wireless channel to the NLoS case.

Following~\cite{11368709}, the deterministic LoS coefficient is the free-space phasor $h^{\mathrm{LoS}}_{i,j}=e^{-j2\pi d_{\mathrm{free},i,j}/\lambda}$, whose dependence on the PA-to-antenna distance $d_{\mathrm{free},i,j}$ carries the receive-array response, while $h^{\mathrm{NLoS}}_{i,j}\sim\mathcal{CN}(0,1)$ is the normalized scattered component. The unit-magnitude normalization $|h^{\mathrm{LoS}}_{i,j}|=1$ is retained in the analysis of Section~IV. In addition to the wireless propagation, the signal experiences a deterministic phase shift along the dielectric waveguide, modeled as $\gamma_j = e^{-jk_zd_{{\rm wg},j}}$, where $d_{{\rm wg},j}$ denotes the in-waveguide distance from the RF feed point to the $j$th PA. The longitudinal propagation constant is $k_z=2\pi n_{\rm eff}/\lambda$, where $n_{\rm eff}>1$ is the effective refractive index of the dielectric \cite{11368709}, and $\lambda$ is the carrier wavelength.

Collecting the individual channel coefficients, the channel matrix can be compactly expressed as \cite{11368709}
\begin{equation}
    \mathbf{H}
    =
    \mathbf{B}
    \odot
    \left(
    \mathbf{K}_{\mathrm{LoS}} \odot \mathbf{H}^{\mathrm{LoS}}
    +
    \mathbf{K}_{\mathrm{NLoS}} \odot \mathbf{H}^{\mathrm{NLoS}}
    \right)
    \odot
    \mathbf{\Gamma},
\end{equation}
where $\mathbf{B}$, $\mathbf{\Gamma}$, $\mathbf{H}^{m}$, and $\mathbf{K}_{m}$, for $m\in\{\mathrm{LoS},\mathrm{NLoS}\}$, denote the large-scale fading, in-waveguide phase-shift, wireless-channel, and Rician scaling matrices, respectively. Their $(i,j)$th entries are
$[\mathbf{B}]_{i,j}=\sqrt{\beta_{i,j}}$,
$[\mathbf{\Gamma}]_{i,j}=\gamma_j$,
$[\mathbf{H}^{m}]_{i,j}=h^{m}_{i,j}$,
$[\mathbf{K}_{\mathrm{LoS}}]_{i,j}=\sqrt{K_{i,j}/(K_{i,j}+1)}$, and $[\mathbf{K}_{\mathrm{NLoS}}]_{i,j}=\sqrt{1/(K_{i,j}+1)}$.

\vspace{-0.2cm}
\subsection{Signal Model}
\label{subsec:signal_model}

PA-DSM conveys information through both the spatial and signal domains while enabling non-coherent detection through differential encoding. The spatial-index component encodes information in the PA activation patterns, thereby enhancing spectral efficiency, while differential encoding eliminates the need for instantaneous CSI acquisition at the receiver.

\subsubsection{Signal Construction and Spectral Efficiency}

Transmission is organized into sequential blocks indexed by $k$, each comprising $T=N_t$ time slots. Following the differential modulation structure in \cite{6879496}, this block length yields a square unitary information matrix $\mathbf{X}_k \in \mathbb{C}^{N_t \times N_t}$, constructed as
\setlength{\abovedisplayskip}{4pt}
\setlength{\belowdisplayskip}{4pt}
\begin{equation}
    \mathbf{X}_k = \mathbf{A}_{q_k}\mathbf{D}_k,
    \label{eq:information_matrix}
\end{equation}
where the permutation matrix $\mathbf{A}_{q_k}$ specifies the PA activation sequence and the diagonal matrix $\mathbf{D}_k$ contains the $M$-PSK symbols. Accordingly, the information bits in each block are partitioned into spatial-index bits $B_I$ and modulation-symbol bits $B_S$.

\textbf{Spatial Bits ($B_I$):} A group of $\lfloor \log_2(N_t!) \rfloor$ spatial-index bits is mapped to the permutation index $q_k$, which selects a matrix $\mathbf{A}_{q_k}$ from the predefined codebook $\mathcal{P} = \{\mathbf{P}_1,\dots,\mathbf{P}_Q\}$ of size $Q = 2^{\lfloor \log_2(N_t!) \rfloor}$. The selected permutation matrix determines the activation sequence of the $N_t$ PAs across the $T$ time slots, where $[\mathbf{A}_{q_k}]_{j,t}=1$ indicates that the $j$th active PA radiates during the $t$th time slot of block $k$. Table~\ref{tab:lookup_nt3} illustrates this mapping for $N_t=3$, where two spatial-index bits select one of $Q=4$ activation patterns from the $N_t!=6$ possible permutations. For $N_t=4$, the $Q=16$ patterns are instead selected greedily from the $N_t!=24$ permutations by starting from the identity and repeatedly adding the permutation whose minimum Hamming distance to the already selected patterns is largest, with ties resolved lexicographically. In both cases the spatial-index bits are Gray-mapped to the listed order.

\textbf{Modulation Bits ($B_S$):} The $N_t\log_2(M)$ modulation bits are mapped to a symbol vector $\mathbf{s}_k=[s_{k,1},\dots,s_{k,N_t}]^{\mathsf{T}}$, whose entries are drawn from an $M$-ary PSK alphabet. These symbols form the diagonal matrix $\mathbf{D}_k=\operatorname{diag}(\mathbf{s}_k)$. Consequently, the information matrix $\mathbf{X}_k=\mathbf{A}_{q_k}\mathbf{D}_k$ contains exactly one nonzero entry in each row and column. Thus, only one PA radiates in each time slot, while each of the $N_t$ selected PAs is activated exactly once per transmission block. This single-PA activation avoids the non-uniform radiated-power distribution associated with simultaneous radiation from multiple PAs along a series-fed dielectric waveguide \cite{11202577}. Since each block conveys $B_I+B_S=\lfloor\log_2(N_t!)\rfloor+N_t\log_2(M)$ bits over $N_t$ time slots, the resulting spectral efficiency is
\setlength{\abovedisplayskip}{4pt}
\setlength{\belowdisplayskip}{4pt}
\begin{equation}
    \eta
    =
    \frac{\lfloor\log_2(N_t!)\rfloor}{N_t}
    +
    \log_2(M)
    \quad \text{bps/Hz}.
    \label{eq:spectral_eff}
\end{equation}

\begin{table}[t] 
\centering 
\footnotesize 
\caption{Permutation mapping for $N_t=3$, adapted from \cite{6879496}.} \label{tab:lookup_nt3} 
\setlength{\tabcolsep}{3pt} 
\renewcommand{\arraystretch}{1.2} 
\begin{tabular}{|c|c|c|c|c|} 
\hline \textbf{Bits} & \textbf{00} & \textbf{01} & \textbf{11} & \textbf{10} \\ 
\hline
$\mathbf{A}_{q_k}$ &
$\begin{bmatrix} 1&0&0\\ 0&1&0\\ 0&0&1 \end{bmatrix}$ & 
$\begin{bmatrix} 1&0&0\\ 0&0&1\\ 0&1&0 \end{bmatrix}$ & 
$\begin{bmatrix} 0&0&1\\ 1&0&0\\ 0&1&0 \end{bmatrix}$ & 
$\begin{bmatrix} 0&1&0\\ 1&0&0\\ 0&0&1 \end{bmatrix}$ \\ 
\hline 
\end{tabular} 
 
\end{table}

\subsubsection{Differential Encoding and Detection}

To enable non-coherent detection, the information matrix $\mathbf{X}_k$ is differentially encoded across consecutive transmission blocks. Specifically, the transmitted signal matrix $\mathbf{S}_k \in \mathbb{C}^{N_t \times N_t}$ evolves recursively as \looseness=-1
\setlength{\abovedisplayskip}{6pt}
\setlength{\belowdisplayskip}{6pt}
\begin{equation}
    \mathbf{S}_k = \mathbf{S}_{k-1}\mathbf{X}_k,
    \label{eq:differential_encoding}
\end{equation}
where $\mathbf{S}_0=\mathbf{I}_{N_t}$ is the known initialization matrix. Since $\mathbf{S}_0$ and $\mathbf{X}_k$ are monomial unitary matrices, $\mathbf{S}_k$ preserves this structure for all $k$. Hence, $\mathbf{S}_k$ contains exactly one nonzero entry in each row and column, ensuring that only one PA radiates in each time slot.

The received signals over the $N_t$ time slots of block $k$ are collected column-wise in $\mathbf{Y}_k \in \mathbb{C}^{N_r \times N_t}$ as 
\setlength{\abovedisplayskip}{4pt}
\setlength{\belowdisplayskip}{4pt}
\begin{equation}
    \mathbf{Y}_k
    =
    \sqrt{P_t}\mathbf{H}\mathbf{S}_k
    +
    \mathbf{N}_k,
    \label{eq:received_signal}
\end{equation}
where $P_t$ denotes the average transmit power and $\mathbf{N}_k \in \mathbb{C}^{N_r \times N_t}$ is the additive white Gaussian noise (AWGN) matrix with independent entries $[\mathbf{N}_k]_{i,t} \sim \mathcal{CN}(0,\sigma_n^2)$, where $\sigma_n^2$ denotes the noise variance.
Assuming that the channel remains quasi-static over two consecutive transmission blocks, substituting \eqref{eq:differential_encoding} into \eqref{eq:received_signal} yields the differential input--output relationship \looseness=-1
\setlength{\abovedisplayskip}{4pt}
\setlength{\belowdisplayskip}{4pt}
\begin{equation}
    \mathbf{Y}_k
    =
    \mathbf{Y}_{k-1}\mathbf{X}_k
    +
    \widetilde{\mathbf{N}}_k, 
    \label{eq:differential_received}
\end{equation}
where 
\setlength{\abovedisplayskip}{4pt}
\setlength{\belowdisplayskip}{4pt}
\begin{equation}
    \widetilde{\mathbf{N}}_k
    =
    \mathbf{N}_k-\mathbf{N}_{k-1}\mathbf{X}_k 
    \label{eq:equivalent_noise}
\end{equation}
denotes the equivalent noise induced by differential processing. Based on \eqref{eq:differential_received}, the conventional differential detector estimates the information matrix by minimizing the squared Euclidean distance over the complete PA-DSM codebook $\mathcal{C}$ as
\setlength{\abovedisplayskip}{4pt}
\setlength{\belowdisplayskip}{4pt}
\begin{equation}
    \hat{\mathbf{X}}_k
    =
    \arg\min_{\mathbf{X}\in\mathcal{C}}
    \left\|
        \mathbf{Y}_k-\mathbf{Y}_{k-1}\mathbf{X}
    \right\|_F^2. 
    \label{eq:ml_detector}
\end{equation}
The metric in \eqref{eq:ml_detector} is formed from the received blocks $\mathbf{Y}_k$ and $\mathbf{Y}_{k-1}$ rather than from previously detected information matrices, so the scheme exhibits no decision-feedback error propagation. Adjacent blocks nevertheless share the noise realization contained in $\mathbf{Y}_{k-1}$, and this dependence is treated in the pairwise error analysis of Section~\ref{subsec:cpep}. Because each candidate information matrix is formed as $\mathbf{X}=\mathbf{A}_q\mathbf{D}$, the detector jointly searches over $Q$ spatial permutations and $M^{N_t}$ modulation-symbol combinations, resulting in $Q M^{N_t}$ candidate hypotheses. The resulting complexity grows rapidly with $N_t$ and $M$, motivating the low-complexity ML-equivalent detection strategy developed in Section~\ref{sec:detector_complexity}. 
\vspace{-0.3em}
\section{Low-Complexity Maximum-Likelihood Decoding for PA-DSM}
\label{sec:detector_complexity}
\vspace{-0.2em}
The exhaustive detector in Section~\ref{subsec:signal_model} jointly searches over $Q$ spatial permutations and $M^{N_t}$ modulation-symbol combinations. We employ the two-stage LC-ML metric decomposition in \cite{10855589} and specialize it to PSK signaling by exploiting the separability of the differential metric and the constant-modulus constellation. The resulting detector obtains the exact minimizing symbols through PSK phase quantization and then evaluates the $Q$ admissible permutations. Consequently, it returns the same minimizer of \eqref{eq:ml_detector} as exhaustive joint search while eliminating the $M^{N_t}$ modulation-combination search.

The detector exploits the structure $\mathbf X_k=\mathbf A_{q_k}\mathbf D_k$. In \eqref{eq:differential_received}, $\mathbf A_{q_k}$ reorders the columns of $\mathbf Y_{k-1}$, while $\mathbf D_k$ scales the reordered columns by the PSK symbols. Hence, the optimal symbol for each column pairing can be determined independently before evaluating the admissible spatial permutations, leading to the two-stage procedure in Algorithm~\ref{alg:lc_ml}.

\begin{algorithm}[t] 
\footnotesize 
\caption{LC-ML Detector for PA-DSM} 
\label{alg:lc_ml} 
\begin{algorithmic}[1] 
\STATE \textbf{Input:} $\mathbf{Y}_k$, $\mathbf{Y}_{k-1}$, permutation codebook $\mathcal{P}$, and PSK alphabet $\mathcal{S}$ 
\STATE \textbf{Phase 1: PSK Metric Pre-Computation} 
\FOR{$t=1$ to $N_t$} 
    \FOR{$l=1$ to $N_t$} 
    \STATE $z_{t,l} \leftarrow \mathbf y_{k-1,l}^{H}\mathbf y_{k,t}$ 
    \STATE $\hat{s}_{t,l} \leftarrow \mathcal Q_{\mathcal S}(z_{t,l})$ 
    \STATE $[\mathbf C]_{t,l} \leftarrow \|\mathbf y_{k,t}\|_2^2+ \|\mathbf y_{k-1,l}\|_2^2 -2\Re\{\hat{s}_{t,l}^*z_{t,l}\}$ 
    \ENDFOR 
\ENDFOR
\STATE \textbf{Phase 2: Spatial-Permutation Search} 
\FOR{$q=1$ to $Q$} 
    \STATE $\Lambda_q\leftarrow0$ 
    \FOR{$t=1$ to $N_t$} 
        \STATE $l_{q,t} \leftarrow \text{the unique }j \text{ such that } [\mathbf{A}_q]_{j,t}=1$ 
        \STATE $\Lambda_q \leftarrow \Lambda_q+[\mathbf{C}]_{t,l_{q,t}}$ 
    \ENDFOR 
\ENDFOR 
\STATE $\hat{q} \leftarrow \displaystyle \arg\min_{q\in\{1,\ldots,Q\}} \Lambda_q$ 
\STATE $\hat{\mathbf{s}}_k \leftarrow [\hat{s}_{1,l_{\hat q,1}},\ldots, \hat{s}_{N_t,l_{\hat q,N_t}}]^{\mathsf T}$ 
\STATE \textbf{Output:} $\hat{\mathbf{X}}_k = \mathbf{A}_{\hat q} \operatorname{diag}(\hat{\mathbf{s}}_k)$
\end{algorithmic} 
 
\end{algorithm}

\vspace{-0.5em}
\subsection{Phase 1: Symbol Metric Pre-Computation}

For each of the $N_t^2$ column pairings $(t,l)$, define $z_{t,l}\triangleq \mathbf y_{k-1,l}^{H}\mathbf y_{k,t}$. Since $|s|=1$ for $s\in\mathcal S$,

\begin{equation} 
\begin{aligned}
    d_{t,l}(s) &\triangleq \left\| \mathbf y_{k,t}-\mathbf y_{k-1,l}s \right\|_2^2\\ &= \|\mathbf y_{k,t}\|_2^2+ \|\mathbf y_{k-1,l}\|_2^2 -2\Re\{s^*z_{t,l}\}. 
\end{aligned} 
\label{eq:expanded_symbol_metric} 
\end{equation} 
Thus, 
\begin{equation} 
\begin{aligned} 
    \hat{s}_{t,l} &= \mathcal Q_{\mathcal S}(z_{t,l}), &[\mathbf C]_{t,l} = d_{t,l}(\hat{s}_{t,l}), 
\end{aligned} 
\label{eq:closed_form_symbol_metric} 
\end{equation}
where $\mathcal Q_{\mathcal S}(\cdot)$ returns the PSK symbol nearest in phase. These costs and symbols are reused across all admissible permutation hypotheses.

\subsection{Phase 2: Spatial Hypothesis Evaluation and Decision}

For each admissible permutation $\mathbf A_q\in\mathcal P$, let $l_{q,t}$ denote the unique column index satisfying $[\mathbf A_q]_{l_{q,t},t}=1$. Using the costs obtained in Phase~1, the permutation metric and decision are
\begin{equation}
\Lambda_q = \sum_{t=1}^{N_t} [\mathbf C]_{t,l_{q,t}}, \qquad \hat q = \arg\min_{q\in\{1,\ldots,Q\}} \Lambda_q. 
\label{eq:permutation_metric_decision} 
\end{equation} 
The corresponding symbols are recovered as $\hat{s}_{k,t}=\hat{s}_{t,l_{\hat q,t}}$, $t=1,\ldots,N_t$. Since Phase~1 obtains the exact minimizing PSK symbol for every column pairing and Phase~2 evaluates all $Q$ admissible permutations, the two-stage detector returns the same minimizer of \eqref{eq:ml_detector} as exhaustive joint search.

\vspace{-0.5em}

\section{Theoretical Performance Analysis}
\label{sec:theoretical_analysis}
 
This section develops a tractable analytical characterization of PA-DSM under a normalized common-$K$ Rician model. We first formulate the pairwise error probability (PEP) of the differential detector. Since the equivalent noise and the previous received block share the noise realization $\mathbf{N}_{k-1}$, their statistical dependence prevents a direct exact characterization. Therefore, a high-SNR approximation is adopted to obtain a tractable conditional PEP expression. The conditional PEP is then averaged over the normalized Rician channel using Craig's representation of the Gaussian $Q$-function \cite{258319}. The resulting analysis reveals the impact of the pairwise difference structure, receive diversity, and previous differential state on the error performance. \looseness=-1
\vspace{-0.5em}
\subsection{Conditional Pairwise Error Probability}
\label{subsec:cpep}

This subsection derives the conditional PEP of the differential detector, which forms the basis for the subsequent channel averaging and asymptotic analysis. Because the detector metric depends on the equivalent noise in \eqref{eq:equivalent_noise}, we first characterize its statistical properties. Since $\mathbf{X}_k$ is unitary, multiplication of $\mathbf{N}_{k-1}$ by $\mathbf{X}_k$ preserves its covariance. Moreover, since $\mathbf{N}_k$ and $\mathbf{N}_{k-1}$ are independent, each entry of the equivalent noise matrix $\widetilde{\mathbf N}_k$ defined in \eqref{eq:equivalent_noise} has marginal variance
\setlength{\abovedisplayskip}{4pt}
\setlength{\belowdisplayskip}{4pt}
\begin{equation}
    \sigma_{\widetilde{n}}^2 = 2\sigma_n^2.
    \label{eq:equivalent_noise_variance}
\end{equation}
For the theoretical analysis, the large-scale channel gain is normalized such that the resulting channel coefficients satisfy $\mathbb{E}[|h_{i,j}|^2]=1$. Accordingly, the nominal SNR is defined as
\setlength{\abovedisplayskip}{4pt}
\setlength{\belowdisplayskip}{4pt}
\begin{equation}
    \rho \triangleq \frac{P_t}{\sigma_n^2}.
    \label{eq:snr_definition}
\end{equation}
The effective SNR associated with the differential noise is therefore $\rho_{\rm eff}=\rho/2$.

Consider the pairwise error event in which the detector selects $\hat{\mathbf{X}}$ instead of the transmitted matrix $\mathbf{X}$. Conditioned on the channel realization $\mathbf{H}$ and the previous differential state $\mathbf{S}_{k-1}$, the pairwise error event associated with the decision rule in \eqref{eq:ml_detector} is
\setlength{\abovedisplayskip}{4pt}
\setlength{\belowdisplayskip}{4pt}
\begin{equation} 
    \lVert \mathbf Y_k-\mathbf Y_{k-1}\hat{\mathbf X} \rVert_F^2 < \lVert \mathbf Y_k-\mathbf Y_{k-1}\mathbf X \rVert_F^2. 
\label{eq:pairwise_error_event} 
\end{equation}
Substituting the differential input--output relation \eqref{eq:differential_received} into \eqref{eq:pairwise_error_event} and defining the difference matrix $\mathbf{\Delta} \triangleq \mathbf{X}-\hat{\mathbf{X}}$, the pairwise error event becomes
\setlength{\abovedisplayskip}{4pt}
\setlength{\belowdisplayskip}{4pt}
\begin{equation}
    \left\|
        \mathbf{Y}_{k-1}\mathbf{\Delta}
        +
        \widetilde{\mathbf{N}}_k
    \right\|_F^2
    <
    \left\|
        \widetilde{\mathbf{N}}_k
    \right\|_F^2.
    \label{eq:pairwise_error_expanded}
\end{equation}
Expanding the squared norm and canceling the common noise-energy term yields the exact error event
\setlength{\abovedisplayskip}{4pt}
\setlength{\belowdisplayskip}{4pt}
\begin{equation}
    Z
    =
    \left\|
        \mathbf{Y}_{k-1}\mathbf{\Delta}
    \right\|_F^2
    +
    2\Re
    \left\{
        \operatorname{Tr}
        \left(
            \widetilde{\mathbf{N}}_k^{H}
            \mathbf{Y}_{k-1}\mathbf{\Delta}
        \right)
    \right\}
    <0.
    \label{eq:exact_error_event}
\end{equation}
The matrices $\widetilde{\mathbf{N}}_k$ and $\mathbf{Y}_{k-1}$ are statistically correlated because both contain the noise matrix $\mathbf{N}_{k-1}$. Under the high-SNR approximation, the contribution of the noise in the previous received block is neglected,
\setlength{\abovedisplayskip}{4pt}
\setlength{\belowdisplayskip}{4pt}
\begin{equation}
    \mathbf{Y}_{k-1}
    \approx
    \sqrt{P_t}\mathbf{H}\mathbf{S}_{k-1}.
    \label{eq:high_snr_approximation}
\end{equation}

Substituting \eqref{eq:high_snr_approximation} into \eqref{eq:exact_error_event} gives the approximate error event $Z_a<0$, where
\begin{equation}
    Z_a
    =
    P_t
    \left\|
        \mathbf{H}\mathbf{S}_{k-1}\mathbf{\Delta}
    \right\|_F^2
    +
    2\sqrt{P_t}\Re
    \left\{
        \operatorname{Tr}
        \left(
            \widetilde{\mathbf{N}}_k^{H}
            \mathbf{H}\mathbf{S}_{k-1}\mathbf{\Delta}
        \right)
    \right\}.
\label{eq:approximate_error_event}
\end{equation}
Conditioned on $\mathbf{H}$ and $\mathbf{S}_{k-1}$, and using \eqref{eq:equivalent_noise_variance}, the conditional mean and variance of $Z_a$ are
\setlength{\abovedisplayskip}{4pt}
\setlength{\belowdisplayskip}{4pt}
\begin{equation}
    \mu_Z = 
    \mathbb{E}
    \left[
        Z_a \mid \mathbf{H},\mathbf{S}_{k-1}
    \right]
    =
    P_t
    \left\|
        \mathbf{H}\mathbf{S}_{k-1}\mathbf{\Delta}
    \right\|_F^2,
    \label{eq:za_mean}
\end{equation}
and
\begin{equation}
    \sigma_Z^2 = 
    \operatorname{Var}
    \left[
        Z_a \mid \mathbf{H},\mathbf{S}_{k-1}
    \right]
    =
    4P_t\sigma_n^2
    \left\|
        \mathbf{H}\mathbf{S}_{k-1}\mathbf{\Delta}
    \right\|_F^2.
    \label{eq:za_variance}
\end{equation}

Since the random term in \eqref{eq:approximate_error_event} is a real-valued linear combination of Gaussian noise samples, $Z_a$ conditioned on $\mathbf{H}$ and $\mathbf{S}_{k-1}$ follows the Gaussian distribution with the mean and variance given in \eqref{eq:za_mean} and \eqref{eq:za_variance}. Next, we define the received squared distance between the two information matrices as
\setlength{\abovedisplayskip}{4pt}
\setlength{\belowdisplayskip}{4pt}
\begin{equation}
    \xi
    \triangleq
    \left\|
        \mathbf{H}\mathbf{S}_{k-1}\mathbf{\Delta}
    \right\|_F^2.
    \label{eq:effective_distance}
\end{equation}
Subsequently, using the nominal SNR in \eqref{eq:snr_definition}, the conditional PEP (CPEP) associated with the high-SNR approximation is 
\setlength{\abovedisplayskip}{4pt}
\setlength{\belowdisplayskip}{4pt}
\begin{equation}
\begin{aligned}
    P_a
    \left(
        \mathbf{X}\rightarrow\hat{\mathbf{X}}
        \mid
        \mathbf{H},\mathbf{S}_{k-1}
    \right)
    &=
    P_a
    \left(
        Z_a<0
        \mid
        \mathbf{H},\mathbf{S}_{k-1}
    \right) \\
    &=
    Q\left( \frac{\mu_Z}{\sigma_Z} \right) = 
    Q\left(
        \sqrt{\frac{\rho\xi}{4}}
    \right).
\end{aligned}
\label{eq:cond_pep}
\end{equation}

\vspace{-0.5em}
\subsection{Channel- and State-Averaged PEP and BEP Analysis}
\label{subsec:channel_averaged_pep}

The CPEP in \eqref{eq:cond_pep} is conditioned on the channel realization $\mathbf{H}$ and the previous differential state $\mathbf{S}_{k-1}$. We first average over $\mathbf{H}$ to obtain the PEP for a given previous state. The resulting PEPs are then averaged over the possible previous differential states to obtain the analytical BEP union bound.

To average the CPEP over the Rician channel, we employ Craig's representation of the $Q$-function \cite{258319}: 

\begin{equation} 
    Q(x) = \frac{1}{\pi} \int_{0}^{\pi/2} \exp\left( -\frac{x^2}{2\sin^2\theta} \right) d\theta. 
    \label{eq:craig_representation} 
\end{equation} 
Substituting \eqref{eq:craig_representation} into \eqref{eq:cond_pep} and taking the expectation over $\mathbf{H}$ gives \vspace{-0.5em}
\setlength{\abovedisplayskip}{4pt}
\setlength{\belowdisplayskip}{4pt}
\begin{equation} 
\begin{aligned} 
    &P_{\mathrm a} \left( \mathbf{X}\rightarrow\hat{\mathbf{X}} \mid\mathbf{S}_{k-1} \right)= \frac{1}{\pi} \int_{0}^{\pi/2} M_{\xi\mid\mathbf{S}_{k-1}} \left( -\frac{\rho}{8\sin^2\theta} \right) d\theta, 
\end{aligned} 
\label{eq:mgf_pep} 
\end{equation}
where $M_{\xi\mid\mathbf{S}_{k-1}}(\nu) \triangleq \mathbb{E}_{\mathbf{H}} [\exp(\nu\xi)\mid\mathbf{S}_{k-1}]$ denotes the MGF of $\xi$ conditioned on the previous differential state $\mathbf S_{k-1}$ \cite{simon2005digital}, and $\nu\in\mathbb{R}$ is the MGF argument.

\subsubsection{Conditional MGF Under Normalized Rician Fading}

To evaluate the MGF in \eqref{eq:mgf_pep}, we adopt a normalized i.i.d. Rician model in which all links have unit average power and share the same Rician factor $K$. The channel matrix is decomposed as
\looseness=-1
\setlength{\abovedisplayskip}{5pt}
\setlength{\belowdisplayskip}{5pt}
\begin{equation}
    \mathbf{H} = \bar{\mathbf{H}} + \widetilde{\mathbf{H}}, 
    \label{eq:analytical_rician_decomposition} 
\end{equation}
where $\bar{\mathbf{H}}$ is the deterministic LoS mean and $\widetilde{\mathbf{H}}$ is the scattered NLoS component. To obtain a tractable evaluation of the analytical expressions, the LoS mean is assumed homogeneous as \looseness=-1
\begin{equation} 
    \bar{\mathbf{H}} = \sqrt{\frac{K}{K+1}} \mathbf{1}_{N_r\times N_t}, 
    \label{eq:homogeneous_rician_mean} 
\end{equation} 
while the scattered entries are independently distributed as %
\setlength{\abovedisplayskip}{4pt}
\setlength{\belowdisplayskip}{4pt}
\begin{equation}
    [\widetilde{\mathbf{H}}]_{i,j} \sim \mathcal{CN}(0,\sigma_h^2), \qquad \sigma_h^2 = \frac{1}{K+1}. 
    \label{eq:rician_nlos_components} 
\end{equation}
This normalization ensures $\mathbb{E}[|h_{i,j}|^2]=1$. Under this model, the pairwise error probability depends on the difference-matrix eigenvalues, the number of receive antennas, the scattered-channel variance, and the deterministic LoS energy along each pairwise error direction.

\begin{lemma} 
\label{lem:mgf} 
Under the Rician model in \eqref{eq:analytical_rician_decomposition}--\eqref{eq:rician_nlos_components}, define
\setlength{\abovedisplayskip}{4pt}
\setlength{\belowdisplayskip}{4pt}
\begin{equation} 
    \mathbf{\Delta}_{\rm eff} \triangleq \mathbf{S}_{k-1}\mathbf{\Delta} \in\mathbb{C}^{N_t\times N_t}. 
    \label{eq:effective_difference_matrix} 
\end{equation}
Let \vspace{-0.5em}
\setlength{\abovedisplayskip}{4pt}
\setlength{\belowdisplayskip}{4pt}
\begin{equation} 
    \mathbf{\Delta}_{\rm eff} \mathbf{\Delta}_{\rm eff}^{H} = \mathbf{V}\mathbf{\Lambda}\mathbf{V}^{H} 
    \label{eq:effective_difference_evd} 
\end{equation}
be its eigendecomposition, where $\mathbf{V}=[\mathbf{v}_1,\ldots,\mathbf{v}_{N_t}]$ is unitary and $\mathbf{\Lambda} =\operatorname{diag} (\lambda_1,\ldots,\lambda_r,0,\ldots,0)$. Here, $r=\operatorname{rank}(\mathbf{\Delta}_{\rm eff})$ and $\lambda_i>0$, $i=1,\ldots,r$, are the nonzero eigenvalues.
For $\nu<1/(\lambda_{\max}\sigma_h^2)$, where $\lambda_{\max}=\mathop{\mathrm{max}}\limits_{1\leq i\leq r}\lambda_i$, the conditional MGF of $\xi$ is
\setlength{\abovedisplayskip}{4pt}
\setlength{\belowdisplayskip}{4pt}
\begin{equation}
\begin{aligned} 
    M_{\xi\mid\mathbf{S}_{k-1}}(\nu) &= \prod_{i=1}^{r} \Bigg[ \left( 1-\nu\lambda_i\sigma_h^2 \right)^{-N_r} \exp\left( \frac{ \nu\lambda_i \mu_i(\mathbf{S}_{k-1}) }{ 1-\nu\lambda_i\sigma_h^2 } \right) \Bigg],
\end{aligned} 
\label{eq:mgf_exact} 
\end{equation}
where\vspace{-0.5em}
\setlength{\abovedisplayskip}{4pt}
\setlength{\belowdisplayskip}{4pt}
\begin{equation} 
    \mu_i(\mathbf{S}_{k-1}) \triangleq \left\| \bar{\mathbf{H}}\mathbf{v}_i \right\|^2 
    \label{eq:noncentrality_effective} 
\end{equation}
represents the deterministic LoS energy along the $i$th pairwise error direction, and captures the contribution of the LoS component to the conditional MGF. 

\end{lemma}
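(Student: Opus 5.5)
Starting from $\xi=\|\mathbf H\mathbf\Delta_{\rm eff}\|_F^2=\sum_{i=1}^{N_r}\mathbf h_i^{H}\ldots$, I would first rewrite the quadratic form row-wise and then diagonalize it. Let $\mathbf h_i^{\mathsf T}$ denote the $i$th row of $\mathbf H$. Then
\begin{equation*}
\xi=\operatorname{Tr}\big(\mathbf H\mathbf\Delta_{\rm eff}\mathbf\Delta_{\rm eff}^{H}\mathbf H^{H}\big)=\operatorname{Tr}\big(\mathbf H\mathbf V\mathbf\Lambda\mathbf V^{H}\mathbf H^{H}\big)=\sum_{j=1}^{r}\lambda_j\|\mathbf H\mathbf v_j\|^2 ,
\end{equation*}
using the eigendecomposition \eqref{eq:effective_difference_evd}. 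Next, define $\mathbf g_j\triangleq\mathbf H\mathbf v_j=\bar{\mathbf H}\mathbf v_j+\widetilde{\mathbf H}\mathbf v_j\in\mathbb C^{N_r}$. Each row of $\widetilde{\mathbf H}$ is $\mathcal{CN}(\mathbf 0,\sigma_h^2\mathbf I_{N_t})$ by \eqref{eq:rician_nlos_components}, and $\mathbf V$ is unitary, so the matrix $\widetilde{\mathbf H}\mathbf V$ has i.i.d. $\mathcal{CN}(0,\sigma_h^2)$ entries. This rotational invariance is the key step: it makes the vectors $\widetilde{\mathbf H}\mathbf v_1,\ldots,\widetilde{\mathbf H}\mathbf v_r$ mutually independent, each with i.i.d. $\mathcal{CN}(0,\sigma_h^2)$ entries. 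Consequently, $\mathbf g_1,\ldots,\mathbf g_r$ are independent Gaussian vectors with means $\bar{\mathbf H}\mathbf v_j$ and covariance $\sigma_h^2\mathbf I_{N_r}$.

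Given this independence, the MGF factorizes as $M_{\xi\mid\mathbf S_{k-1}}(\nu)=\prod_{j=1}^{r}\mathbb E[\exp(\nu\lambda_j\|\mathbf g_j\|^2)]$. Each $\|\mathbf g_j\|^2$ is $\sigma_h^2/2$ times a noncentral chi-square variable with $2N_r$ degrees of freedom and noncentrality $2\mu_j/\sigma_h^2$, where $\mu_j=\|\bar{\mathbf H}\mathbf v_j\|^2$ as in \eqref{eq:noncentrality_effective}. To keep the argument self-contained, I would evaluate the scalar Gaussian integral for a single $\mathcal{CN}(m,\sigma_h^2)$ entry directly, obtaining $\mathbb E[e^{s|x|^2}]=(1-s\sigma_h^2)^{-1}\exp\big(s|m|^2/(1-s\sigma_h^2)\big)$ by completing the square. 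This integral converges exactly when $s\sigma_h^2<1$. Multiplying over the $N_r$ independent entries, with $s=\nu\lambda_j$, yields the $j$th factor of \eqref{eq:mgf_exact}, since the squared magnitudes of the means sum to $\mu_j$.

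The validity condition $\nu<1/(\lambda_{\max}\sigma_h^2)$ is exactly what is needed for every factor to converge, because $\lambda_j\le\lambda_{\max}$ (and $\lambda_j>0$). For $\nu\le 0$, which is the case used in \eqref{eq:mgf_pep}, convergence is automatic. Directions with $\lambda_j=0$ contribute nothing to $\xi$ and are therefore omitted from the product.

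I expect the only delicate point to be the independence step. It relies on the scattered component being i.i.d. across both PAs and receive antennas, so that right-multiplication by a unitary $\mathbf V$ leaves its distribution unchanged. By contrast, the LoS mean is allowed to be arbitrary, which is why the dependence on $\mathbf S_{k-1}$ enters only through $\mu_j$.
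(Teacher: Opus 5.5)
Your proposal is correct and follows essentially the same route as the paper: the eigen-decomposition $\xi=\sum_{i=1}^{r}\lambda_i\|\mathbf H\mathbf v_i\|^2$, independence of the projected Gaussian vectors $\mathbf g_i=\mathbf H\mathbf v_i$ with covariance $\sigma_h^2\mathbf I_{N_r}$, and factorization of the MGF into noncentral chi-square factors. The differences are only cosmetic. You obtain independence from the unitary invariance of $\widetilde{\mathbf H}\mathbf V$, whereas the paper uses zero cross-covariance of jointly Gaussian vectors, and you compute the per-entry MGF by completing the square instead of quoting the noncentral chi-square MGF.
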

\begin{proof} See Appendix~\ref{app:lemma_mgf}. \end{proof}

\begin{remark}[Impact of the Previous Differential State]
Lemma~\ref{lem:mgf} reveals that the previous differential state affects the conditional MGF only through the LoS-dependent term $\mu_i(\mathbf{S}_{k-1})$. Therefore, under the adopted i.i.d. Rician model, the previous state influences the error performance by changing the alignment between the LoS channel and the error directions. In the absence of the LoS component, this dependence disappears.
\end{remark}

\subsubsection{State-Averaged PEP and BEP}

Averaging the conditional PEP over the Rician channel using the MGF expression in \eqref{eq:mgf_exact} gives the PEP conditioned only on the previous differential state. Defining $a_{\theta}\triangleq\rho/(8\sin^2\theta)$, we obtain
\setlength{\abovedisplayskip}{4pt}
\setlength{\belowdisplayskip}{4pt}
\begin{equation} 
\begin{aligned}
    P_{\mathrm a} \left( \mathbf{X}\rightarrow\hat{\mathbf{X}}  \mid\mathbf{S}_{k-1} \right) & =  \frac{1}{\pi} \int_{0}^{\pi/2} \prod_{i=1}^{r} \Bigg[ \left( 1+a_{\theta}\lambda_i\sigma_h^2 \right)^{-N_r}\\ &\qquad\times \exp\left( -\frac{ a_{\theta}\lambda_i \mu_i(\mathbf{S}_{k-1}) }{ 1+a_{\theta}\lambda_i\sigma_h^2 } \right) \Bigg] d\theta. 
\end{aligned}
\label{eq:explicit_channel_averaged_pep} 
\end{equation}

This expression still depends on the previous differential state $\mathbf S_{k-1}$, which is determined by the transmitted information history. Therefore, the overall PEP is obtained by averaging over all possible previous states. Let $\mathcal{T}$ denote the set of possible previous differential states. The resulting state-averaged PEP is

\begin{equation}\label{35}
    \bar P_{\mathrm a} (\mathbf X\rightarrow\hat{\mathbf X}) = \frac{1}{|\mathcal T|} \sum_{\mathbf S_{k-1}\in\mathcal T} P_{\mathrm a} (\mathbf X\rightarrow\hat{\mathbf X}|\mathbf S_{k-1}).
\end{equation}

\begin{lemma}
\label{lem:perm_invariance}
Under the homogeneous LoS model in \eqref{eq:homogeneous_rician_mean}, let the previous differential state be decomposed as
\begin{equation}
    \mathbf S_{k-1}=\mathbf P_{k-1}\mathbf\Phi_{k-1},
\end{equation} 
where $\mathbf P_{k-1}$ is the accumulated permutation matrix and $\mathbf\Phi_{k-1}$ is a diagonal accumulated PSK phase matrix. Then, the state-conditioned PEP in \eqref{eq:explicit_channel_averaged_pep} is invariant to the permutation component,
i.e.,
\begin{equation}
    P_{\mathrm a}(\mathbf X\rightarrow\hat{\mathbf X} |\mathbf S_{k-1}) = P_{\mathrm a}(\mathbf X\rightarrow\hat{\mathbf X} |\mathbf\Phi_{k-1}).
    \label{eq:permutation_invariant_pep}
\end{equation}
\end{lemma}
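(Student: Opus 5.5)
The plan is to use the fact that, by Lemma~\ref{lem:mgf} and \eqref{eq:explicit_channel_averaged_pep}, the state-conditioned PEP depends on $\mathbf S_{k-1}$ only through the nonzero eigenvalues $\lambda_i$ of $\mathbf\Delta_{\rm eff}\mathbf\Delta_{\rm eff}^H$ and the LoS energies $\mu_i(\mathbf S_{k-1})=\|\bar{\mathbf H}\mathbf v_i\|^2$. It therefore suffices to show that replacing $\mathbf S_{k-1}=\mathbf P_{k-1}\mathbf\Phi_{k-1}$ by $\mathbf\Phi_{k-1}$ leaves the multiset of pairs $(\lambda_i,\mu_i)$ unchanged. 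When eigenvalues are repeated, the individual $\mu_i$ depend on the chosen basis; I will therefore verify invariance of the full integrand, or equivalently of the MGF, rather than of the individual pairs.

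\textbf{Step 1.} Write $\mathbf\Delta_{\rm eff}=\mathbf P_{k-1}\mathbf G$ with $\mathbf G\triangleq\mathbf\Phi_{k-1}\mathbf\Delta$. Then
\[
\mathbf\Delta_{\rm eff}\mathbf\Delta_{\rm eff}^H=\mathbf P_{k-1}\mathbf G\mathbf G^H\mathbf P_{k-1}^{\mathsf T}.
\]
Since $\mathbf P_{k-1}$ is a real orthogonal (unitary) matrix, this is a unitary similarity. Hence if $\mathbf G\mathbf G^H=\mathbf W\mathbf\Lambda\mathbf W^H$, then $\mathbf V=\mathbf P_{k-1}\mathbf W$ is a valid eigenbasis with the same $\mathbf\Lambda$ and the same rank $r$.

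\textbf{Step 2.} Compute the LoS energies in this basis:
\[
\mu_i(\mathbf S_{k-1})=\|\bar{\mathbf H}\mathbf P_{k-1}\mathbf w_i\|^2 .
\]
Under \eqref{eq:homogeneous_rician_mean}, $\bar{\mathbf H}=\sqrt{K/(K+1)}\,\mathbf 1_{N_r\times N_t}$. Every row of this matrix is the all-ones vector, and $\mathbf 1^{\mathsf T}\mathbf P_{k-1}=\mathbf 1^{\mathsf T}$ for any permutation matrix. Hence $\bar{\mathbf H}\mathbf P_{k-1}=\bar{\mathbf H}$, which gives $\mu_i(\mathbf S_{k-1})=\|\bar{\mathbf H}\mathbf w_i\|^2=\mu_i(\mathbf\Phi_{k-1})$.

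\textbf{Step 3.} Substitute the identical $(\lambda_i,\mu_i)$ into \eqref{eq:explicit_channel_averaged_pep} to conclude \eqref{eq:permutation_invariant_pep}.

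The main obstacle is the eigenbasis ambiguity when some $\lambda_i$ coincide. To handle it cleanly, I would express the exponent as a basis-free quantity. Stacking the rows $\bar{\mathbf h}_m^H$ of $\bar{\mathbf H}$, the exponent becomes
\[
\sum_{m}\bar{\mathbf h}_m^H\,\nu\mathbf\Delta_{\rm eff}\mathbf\Delta_{\rm eff}^H\bigl(\mathbf I-\nu\sigma_h^2\mathbf\Delta_{\rm eff}\mathbf\Delta_{\rm eff}^H\bigr)^{-1}\bar{\mathbf h}_m .
\]
Similarly, the determinant factor is $\det\bigl(\mathbf I-\nu\sigma_h^2\mathbf\Delta_{\rm eff}\mathbf\Delta_{\rm eff}^H\bigr)^{-N_r}$. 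Both expressions are visibly invariant under conjugation by $\mathbf P_{k-1}$, because $\mathbf P_{k-1}^{\mathsf T}\bar{\mathbf h}_m=\bar{\mathbf h}_m$. This shows that the MGF in \eqref{eq:mgf_exact} itself is unchanged, so the PEP obtained from \eqref{eq:mgf_pep} is unchanged as well.
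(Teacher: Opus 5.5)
Your proof is correct and follows essentially the same route as the paper. Conjugation by $\mathbf P_{k-1}$ carries the eigenbasis to $\mathbf P_{k-1}\mathbf\Phi_{k-1}\mathbf u_i$ with unchanged eigenvalues, and $\bar{\mathbf H}\mathbf P_{k-1}=\bar{\mathbf H}$ removes the permutation from every $\mu_i$. Your basis-free determinant/trace form of the MGF is a genuine tightening: the paper simply fixes the eigenbasis $\mathbf v_i=\mathbf P_{k-1}\mathbf\Phi_{k-1}\mathbf u_i$ and never notes that, for repeated eigenvalues, only the per-eigenspace sums of the $\mu_i$ are well defined.
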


\begin{proof}
    See Appendix \ref{app:state_reduction}.
\end{proof}
This invariance relies on $\bar{\mathbf H}\mathbf P_{k-1}=\bar{\mathbf H}$ and need not hold for a geometry-dependent LoS mean.

Consequently, the averaging over all possible previous differential states can be reduced to an averaging over the set of possible accumulated phase matrices. Let $\mathcal F$ denote this set. The state-averaged PEP becomes
\setlength{\abovedisplayskip}{4pt}
\setlength{\belowdisplayskip}{4pt}
\begin{equation} 
    \begin{aligned} 
    \bar{P}_{\mathrm a} \left( \mathbf{X}\rightarrow\hat{\mathbf{X}} \right) 
    &\triangleq \frac{1}{|\mathcal{F}|} \sum_{\mathbf{\Phi_{k-1}}\in\mathcal{F}} P_{\mathrm a} \left( \mathbf{X}\rightarrow\hat{\mathbf{X}} \mid\mathbf{\Phi}_{k-1}\right).
    \end{aligned}
\label{eq:implemented_state_average} 
\end{equation}
Since each information matrix is given by $\mathbf X_k=\mathbf A_{q_k}\mathbf D_k$, where $\mathbf D_k$ contains the PSK symbols, the accumulated phase matrix $\mathbf\Phi_{k-1}$ remains diagonal with PSK entries. This follows from the closure property of the PSK constellation under multiplication, i.e., $e^{j2\pi m_1/M}e^{j2\pi m_2/M} = e^{j2\pi(m_1+m_2)/M}$. Therefore, the set $\mathcal F$ contains all possible diagonal accumulated phase matrices, whose diagonal entries are selected from the underlying $M$-PSK constellation. Hence,
\begin{equation}
    |\mathcal F|=M^{N_t}.
\end{equation}
Without this reduction, the state average would require enumerating both the accumulated permutation and phase components. Lemma~\ref{lem:perm_invariance} removes the explicit permutation dimension from this calculation. 
 
To obtain the BEP over the complete codebook, we average the pairwise error events over all transmitted and erroneously detected information matrices and weight each event by its corresponding bit error distance:
\setlength{\abovedisplayskip}{4pt}
\setlength{\belowdisplayskip}{4pt}
\begin{equation} 
\begin{aligned} 
    P_{b,\mathrm a} \leq \frac{1}{B|\mathcal{C}|} \sum_{\mathbf{X}\in\mathcal{C}} \sum_{\substack{ \hat{\mathbf{X}}\in\mathcal{C}\\ \hat{\mathbf{X}}\neq\mathbf{X} }} &d_H(\mathbf{X},\hat{\mathbf{X}}) \bar{P}_{\mathrm a} \left( \mathbf{X}\rightarrow\hat{\mathbf{X}} \right),
\end{aligned} 
\label{eq:general_analytical_union_bound} 
\end{equation} 
where $B$ denotes the number of bits transmitted per block, $|\mathcal C|$ is the codebook size, and $d_H(\mathbf X,\hat{\mathbf X})$ represents the number of bit errors caused by detecting $\hat{\mathbf X}$ instead of $\mathbf X$.

The derived expressions apply to the normalized i.i.d. common-$K$ Rician model and do not capture the link-dependent path loss, blockage, Rician factors, waveguide attenuation, or geometry-dependent LoS structure of the PA channel. In particular, the phase-state reduction in Lemma~\ref{lem:perm_invariance} relies on the homogeneous LoS mean.\vspace{-0.5em}

\subsection{Asymptotic Pairwise Error and Codebook Analysis}
\label{subsec:asymptotic_analysis}

The state-averaged PEP in \eqref{eq:implemented_state_average} averages over the representative phase matrices in $\mathcal F$. Based on Lemma~\ref{lem:perm_invariance}, we first characterize the high-SNR behavior of the PEP for a fixed $\mathbf\Phi_{k-1}\in\mathcal F$ and then average the result over $\mathcal F$.

\begin{theorem}[Asymptotic Pairwise Error Upper Bound] \label{thm:asymptotic}

For a given pairwise error event $\mathbf X\rightarrow\hat{\mathbf X}$ and representative previous phase state $\mathbf\Phi_{k-1}\in\mathcal F$, the corresponding analytical PEP satisfies
\begin{equation} 
\begin{aligned} 
    &P_{\mathrm a} \left( \mathbf X\rightarrow\hat{\mathbf X} \mid\mathbf\Phi_{k-1} \right)\leq \frac{1}{2} \left( G_c\rho \right)^{-G_d} e^{-\Gamma_{\rm PA}(\mathbf\Phi_{k-1})}[ 1+o(1)], 
\end{aligned} 
\label{eq:asymptotic_pep} 
\end{equation}
as the nominal SNR $\rho\rightarrow\infty$, where $o(1)\rightarrow0$. The polynomial SNR exponent is 
\begin{equation}
    G_d \triangleq rN_r, \qquad r \triangleq \operatorname{rank}(\mathbf{\Delta}_{\rm eff}) = \operatorname{rank}(\mathbf{\Delta}), 
    \label{eq:pairwise_diversity} 
\end{equation}
the pairwise coding coefficient is
\setlength{\abovedisplayskip}{4pt}
\setlength{\belowdisplayskip}{4pt}
\begin{equation} 
    G_c \triangleq \frac{1}{8} \left( \prod_{i=1}^{r}\lambda_i \right)^{1/r} \sigma_h^2 = \frac{1}{8(K+1)} \left( \prod_{i=1}^{r}\lambda_i \right)^{1/r}, 
    \label{eq:coding_gain_final} 
\end{equation}
and the phase-dependent LoS contribution is
\begin{equation} 
    \Gamma_{\rm PA}(\mathbf\Phi_{k-1}) \triangleq \sum_{i=1}^{r} \frac{ \mu_i(\mathbf\Phi_{k-1}) }{ \sigma_h^2 }. 
    \label{eq:pa_decay_general} 
\end{equation}
Here, $\lambda_i$, $i=1,\ldots,r$, are the nonzero eigenvalues of $\mathbf{\Delta}\mathbf{\Delta}^{H}$, and $\mathbf{u}_i$ is a corresponding unit-norm eigenvector. Moreover, $\mu_i(\mathbf{\Phi}_{k-1}) \triangleq \|\bar{\mathbf H}\mathbf{\Phi}_{k-1}\mathbf{u}_i\|^2$ is the deterministic LoS energy along the $i$th pairwise error direction. This phase-only representation follows from the permutation invariance established in Lemma~\ref{lem:perm_invariance}.

Averaging \eqref{eq:asymptotic_pep} over $\mathbf\Phi\in\mathcal F$ gives
\setlength{\abovedisplayskip}{4pt}
\setlength{\belowdisplayskip}{4pt}
\begin{equation} 
\begin{aligned} 
    \bar P_{\mathrm a} \left( \mathbf X\rightarrow\hat{\mathbf X} \right) &\leq \frac{(G_c\rho)^{-G_d}}{2|\mathcal F|} \sum_{\mathbf\Phi\in\mathcal F} e^{-\Gamma_{\rm PA}(\mathbf\Phi)} [1+o(1)]. 
\end{aligned} 
\label{eq:state_averaged_asymptotic_pep} 
\end{equation}
Thus, phase-state averaging preserves $G_d$ and $G_c$ and only averages the LoS-dependent factor.

\end{theorem}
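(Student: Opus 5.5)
We start from the state-conditioned PEP in \eqref{eq:explicit_channel_averaged_pep} and use Lemma~\ref{lem:perm_invariance} to replace $\mathbf S_{k-1}$ by $\mathbf\Phi_{k-1}$. The eigenstructure can then be moved from $\mathbf\Delta_{\rm eff}$ to $\mathbf\Delta$. Since $\mathbf\Phi_{k-1}$ is diagonal unitary, $\mathbf\Delta_{\rm eff}\mathbf\Delta_{\rm eff}^H=\mathbf\Phi_{k-1}(\mathbf\Delta\mathbf\Delta^H)\mathbf\Phi_{k-1}^H$. This matrix is unitarily similar to $\mathbf\Delta\mathbf\Delta^H$, so it has the same nonzero eigenvalues $\lambda_1,\ldots,\lambda_r$, the rank is preserved, and we may take $\mathbf v_i=\mathbf\Phi_{k-1}\mathbf u_i$. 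Substituting gives $\mu_i(\mathbf\Phi_{k-1})=\|\bar{\mathbf H}\mathbf\Phi_{k-1}\mathbf u_i\|^2$, and it also justifies $r=\operatorname{rank}(\mathbf\Delta_{\rm eff})=\operatorname{rank}(\mathbf\Delta)$ in \eqref{eq:pairwise_diversity}.

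Next we bound the integrand uniformly in $\theta$. The function $a_\theta=\rho/(8\sin^2\theta)$ is minimized at $\theta=\pi/2$, where $a_{\pi/2}=\rho/8$. Each factor $(1+a\lambda_i\sigma_h^2)^{-N_r}$ is decreasing in $a$. The exponent $a\lambda_i\mu_i/(1+a\lambda_i\sigma_h^2)$ is increasing in $a$, so $\exp(-\cdot)$ is also decreasing in $a$. The whole integrand is therefore bounded by its value at $a=\rho/8$, and integrating over $[0,\pi/2]$ with the prefactor $1/\pi$ yields the constant $\tfrac12$. This gives
\begin{equation*}
P_{\mathrm a}(\mathbf X\rightarrow\hat{\mathbf X}\mid\mathbf\Phi_{k-1})\le \frac12\prod_{i=1}^r\Big(1+\tfrac{\rho}{8}\lambda_i\sigma_h^2\Big)^{-N_r}\exp\Big(-\tfrac{\rho\lambda_i\mu_i/8}{1+\rho\lambda_i\sigma_h^2/8}\Big).
\end{equation*}

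We then take $\rho\to\infty$. For the polynomial part, $(1+\tfrac{\rho}{8}\lambda_i\sigma_h^2)^{-N_r}\le(\tfrac{\rho}{8}\lambda_i\sigma_h^2)^{-N_r}$, and the product over $i$ equals $(G_c\rho)^{-rN_r}$ with $G_c$ as in \eqref{eq:coding_gain_final}. For the exponential part, $\frac{\rho\lambda_i\mu_i/8}{1+\rho\lambda_i\sigma_h^2/8}=\frac{\mu_i}{\sigma_h^2}\big(1-\frac{1}{1+\rho\lambda_i\sigma_h^2/8}\big)$. Its value is therefore $\mu_i/\sigma_h^2-O(1/\rho)$, so $\exp(-\cdot)=e^{-\mu_i/\sigma_h^2}(1+O(1/\rho))$. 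Taking the product over $i$ gives $e^{-\Gamma_{\rm PA}(\mathbf\Phi_{k-1})}[1+o(1)]$, which establishes \eqref{eq:asymptotic_pep}.

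Finally, averaging over $\mathcal F$ via \eqref{eq:implemented_state_average} gives \eqref{eq:state_averaged_asymptotic_pep}. Neither $G_d$ nor $G_c$ depends on $\mathbf\Phi$, because the eigenvalues are phase-invariant. The $o(1)$ terms can be factored out uniformly because $\mathcal F$ is finite.

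The main obstacle is the exponential factor. It is bounded but not asymptotically negligible, so it cannot simply be dropped; the key is to extract its exact limit $e^{-\Gamma_{\rm PA}}$ while keeping the inequality direction consistent. The monotonicity argument in $a$ is what justifies the $\theta=\pi/2$ bound. The deviation of the exponential factor from its limit is absorbed into the $1+o(1)$ term rather than handled by a strict inequality.
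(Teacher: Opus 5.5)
Your proof is correct and follows the paper's argument: reduce to $\mathbf S_{k-1}=\mathbf\Phi_{k-1}$ with $\mathbf v_i=\mathbf\Phi_{k-1}\mathbf u_i$, bound the Craig integrand by its value at $\theta=\pi/2$ to obtain the factor $\tfrac12$, and then extract $(G_c\rho)^{-G_d}$ and $e^{-\Gamma_{\rm PA}(\mathbf\Phi_{k-1})}$ as $\rho\to\infty$. The differences are cosmetic: you check monotonicity directly on the closed-form factors rather than through $\frac{\mathrm d}{\mathrm d\nu}M_{\xi}(\nu)=\mathbb E[\xi e^{\nu\xi}]\ge 0$, and you bound the polynomial factor by the exact inequality $(1+x)^{-N_r}\le x^{-N_r}$ rather than an asymptotic equivalence, so that only the LoS factor is absorbed into $[1+o(1)]$.
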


\begin{proof} See Appendix~\ref{app:asymptotic_pep}. \end{proof}

Theorem~\ref{thm:asymptotic} characterizes one pairwise error event. The complete BEP union bound in \eqref{eq:general_analytical_union_bound}, however, contains terms for all distinct information-matrix pairs. For each pair, the difference rank $r$ determines the polynomial SNR exponent $G_d=rN_r$. Hence, pairwise terms with smaller rank decrease more slowly within the asymptotic upper bound. The minimum difference rank therefore identifies the slowest-decaying terms in the analytical BEP.

The numerical analytical evaluation reported later fixes one transmitted reference matrix rather than averaging over every possible transmitted matrix. We therefore compare the minimum difference rank of the complete codebook with the minimum difference rank among errors originating from the selected reference matrix, $\mathbf{X}_{\rm ref}$.

\begin{corollary}[Codebook- and Reference-Level Minimum Exponents]
\label{cor:codebook_diversity}

For the employed PA-DSM information-matrix codebook, which contains all $M^{N_t}$ PSK-symbol combinations for each retained permutation, define the minimum pairwise difference rank over the complete codebook as
\setlength{\abovedisplayskip}{4pt}
\setlength{\belowdisplayskip}{4pt}
\begin{equation}
\begin{aligned}
    r_{\min} &\triangleq \min_{\substack{ \mathbf{X},\hat{\mathbf{X}}\in\mathcal{C}\\ \mathbf{X}\neq\hat{\mathbf{X}} }} \operatorname{rank} \left( \mathbf{X}-\hat{\mathbf{X}} \right),\\ r_{\min}^{\rm ref} 
    &\triangleq \min_{\substack{ \hat{\mathbf{X}}\in\mathcal{C}\\ \hat{\mathbf{X}}\neq\mathbf{X}_{\rm ref} }} \operatorname{rank} \left( \mathbf{X}_{\rm ref}-\hat{\mathbf{X}} \right). 
\end{aligned} 
\label{eq:minimum_pairwise_ranks}
\end{equation} 
Then, 
\begin{equation} 
    r_{\min}^{\rm ref} = r_{\min} = 1, \qquad G_{d,\min}^{\rm ref} = G_{d,\min} = N_r, 
    \label{eq:pa_dsm_aber_diversity} 
\end{equation} 
where $G_{d,\min}^{\rm ref} =r_{\min}^{\rm ref}N_r$ and $G_{d,\min}=r_{\min}N_r$. 
\end{corollary}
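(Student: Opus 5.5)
The proof is a sandwich argument: every pairwise difference in the codebook has rank at least one, and a rank-one pair can be exhibited from any reference matrix. Nonnegativity of rank gives the lower bound at once. Since $\mathcal C$ contains distinct information matrices, $\mathbf X-\hat{\mathbf X}\neq \mathbf 0$ for every admissible pair, so $\operatorname{rank}(\mathbf X-\hat{\mathbf X})\geq 1$. Hence $r_{\min}\geq 1$. Because the reference-level minimum is taken over a subset of the pairs in the codebook-level minimum, we also have $r_{\min}^{\rm ref}\geq r_{\min}\geq 1$. It therefore suffices to show $r_{\min}^{\rm ref}\leq 1$ for the chosen $\mathbf X_{\rm ref}$, since then $1\le r_{\min}\le r_{\min}^{\rm ref}\le 1$.

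To get the upper bound, I will construct an explicit rank-one competitor for an arbitrary reference matrix. Write $\mathbf X_{\rm ref}=\mathbf A_q\operatorname{diag}(\mathbf s)$ with $\mathbf A_q\in\mathcal P$ and $\mathbf s\in\mathcal S^{N_t}$. Take $\hat{\mathbf X}=\mathbf A_q\operatorname{diag}(\hat{\mathbf s})$, where $\hat{\mathbf s}$ agrees with $\mathbf s$ everywhere except in one coordinate $t_0$, and there $\hat s_{t_0}\neq s_{t_0}$ is another $M$-PSK symbol. This requires $M\geq 2$. The matrix $\hat{\mathbf X}$ lies in $\mathcal C$ because the codebook contains all $M^{N_t}$ symbol combinations for each retained permutation. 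Then
\begin{equation*}
\mathbf X_{\rm ref}-\hat{\mathbf X}=\mathbf A_q\operatorname{diag}(\mathbf s-\hat{\mathbf s})=(s_{t_0}-\hat s_{t_0})\,\mathbf A_q\mathbf e_{t_0}\mathbf e_{t_0}^{\mathsf T}.
\end{equation*}
This is a nonzero scalar multiple of an outer product of two unit vectors, so it has rank exactly one. Consequently $r_{\min}^{\rm ref}\leq 1$, and the sandwich above gives $r_{\min}^{\rm ref}=r_{\min}=1$.

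The diversity statements then follow directly from Theorem~\ref{thm:asymptotic}. There $G_d=rN_r$ with $r=\operatorname{rank}(\mathbf\Delta)$; for this we use that $\mathbf S_{k-1}$ is unitary, so $\operatorname{rank}(\mathbf\Delta_{\rm eff})=\operatorname{rank}(\mathbf\Delta)$. Substituting $r_{\min}=r_{\min}^{\rm ref}=1$ gives $G_{d,\min}=G_{d,\min}^{\rm ref}=N_r$.

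No step is technically hard. The only point needing care is that the rank-one competitor really belongs to $\mathcal C$. This rests on two facts: the codebook is a full Cartesian product of the retained permutations with all PSK vectors, and $M\ge 2$. Both should be stated explicitly in the proof. The construction uses the same permutation as $\mathbf X_{\rm ref}$, so it does not depend on the greedy permutation selection used for $N_t=4$, nor on which reference matrix is chosen.
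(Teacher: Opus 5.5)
Your proof is correct and follows essentially the same route as the paper. Both arguments exhibit, for an arbitrary codeword (including $\mathbf{X}_{\rm ref}$), a competitor with the same permutation that differs in exactly one PSK symbol, so the difference has rank one; they then use that distinct codewords cannot have a rank-zero difference and substitute into $G_d=rN_r$, with your explicit notes on $r_{\min}^{\rm ref}\geq r_{\min}$ and $M\geq 2$ being small clarifications the paper leaves implicit.
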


\begin{proof}
Consider any $\mathbf{X}=\mathbf{A}_q\mathbf{D}_s\in\mathcal{C}$, including $\mathbf{X}=\mathbf{X}_{\rm ref}$. Since the codebook contains every PSK-symbol combination for each retained permutation, it contains another information matrix $\hat{\mathbf{X}} =\mathbf{A}_q\hat{\mathbf{D}}_s$ such that $\mathbf{D}_s$ and $\hat{\mathbf{D}}_s$ differ in exactly one diagonal entry. Therefore,
\setlength{\abovedisplayskip}{4pt}
\setlength{\belowdisplayskip}{4pt}
\begin{equation}
    \operatorname{rank} \left( \mathbf{X}-\hat{\mathbf{X}} \right) = \operatorname{rank} \left[ \mathbf{A}_q \left( \mathbf{D}_s-\hat{\mathbf{D}}_s \right) \right] = 1, 
    \label{eq:rank_one_pair}
\end{equation}
because $\mathbf{A}_q$ is nonsingular and $\mathbf{D}_s-\hat{\mathbf{D}}_s$ has exactly one nonzero diagonal entry. Thus, a rank-one difference exists for every transmitted information matrix, including any selected reference matrix, which gives $r_{\min}\leq1$ and $r_{\min}^{\rm ref}\leq1$. Since the difference between two distinct information matrices cannot have rank zero, both minimum ranks equal one. Substitution into $G_d=rN_r$ yields \eqref{eq:pa_dsm_aber_diversity}.
\end{proof}

Evaluating \eqref{eq:general_analytical_union_bound} requires treating every matrix in $\mathcal C$ as transmitted. To reduce this numerical burden, the analytical curves in Section~\ref{subsec:antenna_scaling_canyon} fix $\mathbf X_{\rm ref}\in\mathcal C$ and evaluate
\begin{equation} 
\begin{aligned} 
    P_{b,\mathrm a}^{\rm ref} \triangleq \frac{1}{B} \sum_{\substack{ \hat{\mathbf{X}}\in\mathcal{C}\\ \hat{\mathbf{X}}\neq\mathbf{X}_{\rm ref} }} &d_H \left( \mathbf{X}_{\rm ref}, \hat{\mathbf{X}} \right) \bar{P}_{\mathrm a} \left( \mathbf{X}_{\rm ref} \rightarrow \hat{\mathbf{X}} \right).
\end{aligned}
\label{eq:reference_codeword_characterization} 
\end{equation}

Corollary~\ref{cor:codebook_diversity} establishes that the reference-codeword sum in \eqref{eq:reference_codeword_characterization} and the complete codeword-averaged union bound contain terms with the same minimum polynomial SNR exponent $N_r$. This result does not imply equal finite-SNR values or equal asymptotic coefficients, because the numbers of bits in error, pairwise eigenvalues, and deterministic LoS energy terms can vary with the transmitted information matrix. Therefore, the reference-codeword curves in Fig.~\ref{fig:theory_vs_sim} are used to illustrate the predicted increase in the high-SNR slope with $N_r$, rather than to approximate the complete finite-SNR BEP bound.

We finally examine the remaining terms in \eqref{eq:asymptotic_pep}. Since the previous phase state is unitary, it does not change the nonzero eigenvalues of $\mathbf{\Delta}\mathbf{\Delta}^{H}$. Hence, the pairwise coding coefficient $G_c$ is independent of $\mathbf{\Phi}_{k-1}$. By contrast, the LoS contribution can remain phase dependent because $\mathbf{\Phi}_{k-1}$ changes the alignment of the pairwise error directions $\mathbf u_i$ with the deterministic LoS channel.

\begin{corollary}[LoS Contribution Under the Homogeneous Mean]
\label{cor:homogeneous_los_decay}

Under the homogeneous LoS mean in \eqref{eq:homogeneous_rician_mean}, the phase-dependent LoS contribution in \eqref{eq:pa_decay_general} becomes
\setlength{\abovedisplayskip}{4pt}
\setlength{\belowdisplayskip}{4pt}
\begin{equation}
    \Gamma_{\rm PA}(\mathbf{\Phi}_{k-1}) = K \sum_{i=1}^{r} \left\| \mathbf{1}_{N_r\times N_t} \mathbf{\Phi}_{k-1}\mathbf{u}_i \right\|^2. 
    \label{eq:gamma_pa_homogeneous} 
\end{equation}
\end{corollary}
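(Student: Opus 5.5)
The plan is a direct substitution of the homogeneous LoS mean \eqref{eq:homogeneous_rician_mean} into the definitions used in Theorem~\ref{thm:asymptotic}. First, recall from that theorem that $\Gamma_{\rm PA}(\mathbf\Phi_{k-1})=\sum_{i=1}^{r}\mu_i(\mathbf\Phi_{k-1})/\sigma_h^2$. Here $\mu_i(\mathbf\Phi_{k-1})=\|\bar{\mathbf H}\mathbf\Phi_{k-1}\mathbf u_i\|^2$, and $\mathbf u_i$ are the unit-norm eigenvectors of $\mathbf\Delta\mathbf\Delta^{H}$. By Lemma~\ref{lem:perm_invariance}, the permutation part $\mathbf P_{k-1}$ of the state has already been absorbed, since $\bar{\mathbf H}\mathbf P_{k-1}=\bar{\mathbf H}$. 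Only the phase matrix therefore appears.

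Next, insert $\bar{\mathbf H}=\sqrt{K/(K+1)}\,\mathbf 1_{N_r\times N_t}$. Pulling the scalar out of the Euclidean norm gives
\begin{equation*}
\mu_i(\mathbf\Phi_{k-1})=\frac{K}{K+1}\left\|\mathbf 1_{N_r\times N_t}\mathbf\Phi_{k-1}\mathbf u_i\right\|^2 .
\end{equation*}
Dividing by $\sigma_h^2=1/(K+1)$ from \eqref{eq:rician_nlos_components} cancels the factor $K+1$ and leaves the factor $K$. Summing over $i=1,\ldots,r$ then yields \eqref{eq:gamma_pa_homogeneous}.

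The only point needing care is consistency of the eigenvectors. Lemma~\ref{lem:mgf} defines $\mu_i$ with the eigenvectors $\mathbf v_i$ of $\mathbf\Delta_{\rm eff}\mathbf\Delta_{\rm eff}^{H}$, whereas the corollary uses $\mathbf\Phi_{k-1}\mathbf u_i$. I would verify that the two agree. Take $\mathbf S_{k-1}=\mathbf P_{k-1}\mathbf\Phi_{k-1}$, which is unitary. Then $\mathbf\Delta_{\rm eff}\mathbf\Delta_{\rm eff}^{H}=\mathbf S_{k-1}\mathbf\Delta\mathbf\Delta^{H}\mathbf S_{k-1}^{H}$, so one admissible choice is $\mathbf v_i=\mathbf S_{k-1}\mathbf u_i$ with unchanged eigenvalues $\lambda_i$. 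Consequently $\bar{\mathbf H}\mathbf v_i=\bar{\mathbf H}\mathbf P_{k-1}\mathbf\Phi_{k-1}\mathbf u_i=\bar{\mathbf H}\mathbf\Phi_{k-1}\mathbf u_i$.

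I expect this eigenvector step to be the main obstacle. If some $\lambda_i$ is repeated, the eigenbasis is not unique, and the individual $\mu_i$ depend on the chosen basis. I would handle this by noting that in \eqref{eq:pa_decay_general} the terms enter only through the sum. That sum equals $\operatorname{Tr}(\bar{\mathbf H}\mathbf\Phi_{k-1}\mathbf U_r\mathbf U_r^{H}\mathbf\Phi_{k-1}^{H}\bar{\mathbf H}^{H})$, where $\mathbf U_r\mathbf U_r^{H}$ is the orthogonal projector onto the range of $\mathbf\Delta\mathbf\Delta^{H}$. This projector is basis independent, so $\Gamma_{\rm PA}$ is well defined. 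The stated identity then follows for any choice of the $\mathbf u_i$.
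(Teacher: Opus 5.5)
Your proposal is correct and follows the paper's proof, which is exactly this substitution of $\bar{\mathbf H}=\sqrt{K/(K+1)}\,\mathbf 1_{N_r\times N_t}$ and $\sigma_h^2=1/(K+1)$ into the definition of $\Gamma_{\rm PA}(\mathbf\Phi_{k-1})$. Your extra checks are sound: $\mathbf v_i=\mathbf S_{k-1}\mathbf u_i$ is an admissible eigenvector choice with $\bar{\mathbf H}\mathbf P_{k-1}=\bar{\mathbf H}$, and the sum over $i$ equals a trace against the basis-independent projector onto the range of $\mathbf\Delta\mathbf\Delta^{H}$; they make explicit what the paper leaves to Lemma~\ref{lem:perm_invariance} and Appendix~\ref{app:asymptotic_pep}.
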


\begin{proof} 
Substituting the definition of $\mu_i(\mathbf{\Phi}_{k-1})$ in Theorem~\ref{thm:asymptotic} together with \eqref{eq:homogeneous_rician_mean} and \eqref{eq:rician_nlos_components} into \eqref{eq:pa_decay_general} yields \eqref{eq:gamma_pa_homogeneous}.
\end{proof}

The quantities $G_c$ and $\Gamma_{\rm PA}(\mathbf{\Phi}_{k-1})$ describe the scattered NLoS and deterministic LoS contributions to the asymptotic pairwise error upper bound, respectively. Increasing $K$ reduces $\sigma_h^2=1/(K+1)$ and hence $G_c$, which increases the polynomial factor $(G_c\rho)^{-G_d}$. Conversely, a larger $K$ strengthens the deterministic LoS contribution and reduces the exponential factor in \eqref{eq:asymptotic_pep}, with the reduction determined by the alignment of $\mathbf{\Phi}_{k-1}\mathbf u_i$ with the homogeneous LoS channel. Thus, the effect of $K$ cannot be inferred from the scattered-channel variance alone. \looseness=-1

\vspace{-0.4em}
\section{Numerical Results and Complexity Assessment}
\label{sec:results}

This section evaluates PA-DSM under two complementary channel settings. First, geometry-dependent Monte Carlo computer simulations assess its performance in representative indoor-office, indoor-factory, and street-canyon FR3 deployments at $16.95$ GHz \cite{10901735,11160744,11161884}. Second, Section~\ref{subsec:antenna_scaling_canyon} compares the Monte Carlo results with the reference-codeword MGF characterization developed in Section~\ref{sec:theoretical_analysis}.

\begin{figure}
    \centering
    \includegraphics[width=0.8\linewidth]{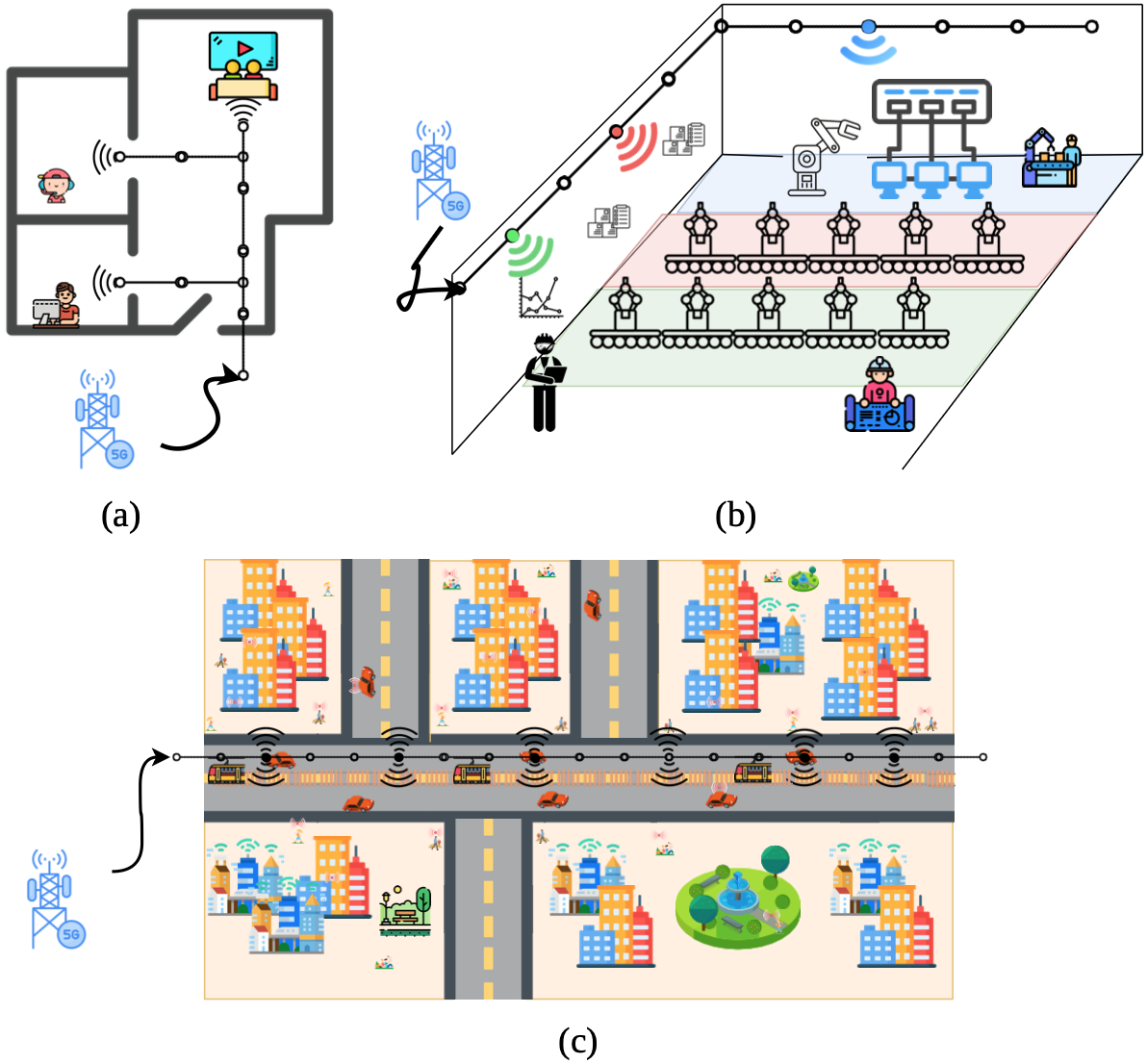}
    \caption{Conceptual layouts of the evaluated FR3 environments: (a) indoor office, (b) indoor factory, and (c) street canyon.}
    \label{fig:env_illustrations}
    
\end{figure}

\vspace{-0.5em}
\subsection{Simulation Environments and Parameters}

Fig.~\ref{fig:env_illustrations} illustrates the three evaluated propagation environments, whose parameters are summarized in Table~\ref{tab:sim_params}. In each environment, the $N_{\rm all}$ candidate PAs are uniformly placed along an $x$-directed waveguide as $\mathbf p_j=[x_j,D_y/2,z_{\rm wg}]^{\mathsf T}$, where $x_j=(j-1)D_x/(N_{\rm all}-1)$, with the feed at $\mathbf u_f=[0,D_y/2,z_{\rm wg}]^{\mathsf T}$. The UE employs an $x$-directed uniform linear array (ULA) centered at $\mathbf u=[u_x,u_y,u_z]^{\mathsf T}$, with half-wavelength spacing and $u_z=1.5$ m. For each evaluated UE location, the $N_t$ PAs are selected once, held fixed during the corresponding PA-DSM trials, and differential encoding is initialized with $\mathbf S_0=\mathbf I_{N_t}$. Unless otherwise stated, the LoS states, shadowing variables, and scattered channel coefficients are generated independently across PA-to-UE links, and each reported bit error rate (BER) point is simulated until at least $N_{\rm err}=5000$ bit errors are observed or $N_{\max}=10^7$ trials are completed. Within each trial, the same channel realization is retained over the two consecutive transmission blocks required for differential detection, in accordance with the quasi-static assumption in Section~\ref{subsec:signal_model}.

The LoS/NLoS state of each link is drawn according to a distance-dependent LoS probability, modeled for the simulations as \cite{10901735,8620255}
\begin{equation} 
    P_{{\rm LoS},i,j} = 
    \begin{cases} 1-\dfrac{d_{{\rm free},i,j}}{d_{\rm cut}}, & 0\leq d_{{\rm free},i,j}\leq d_{\rm cut},\\[1mm] 0, & d_{{\rm free},i,j}>d_{\rm cut},
    \end{cases} 
\label{eq:los_probability} 
\end{equation}
where $d_{\rm cut}$ is given in Table~\ref{tab:sim_params}.

\begin{table}[t]
\centering
\caption{Simulation parameters}
\label{tab:sim_params}
\renewcommand{\arraystretch}{1.0}
\setlength{\tabcolsep}{2.5pt}
\footnotesize
\begin{tabularx}{\columnwidth}{@{}Xccc@{}}
\toprule
\textbf{Global Parameter} & \multicolumn{3}{c}{\textbf{Value}}\\
\midrule
Carrier frequency / bandwidth, $f_c$ / $W$ \cite{10901735}
    & \multicolumn{3}{c}{16.95 GHz / 50 MHz}\\
Noise power / figure, $\sigma_n^2$ / NF
    & \multicolumn{3}{c}{$-90$ dBm (incl. NF) / 7 dB}\\
UE height / speed, $u_z$ / $v$
    & \multicolumn{3}{c}{1.5 m / 1 m/s}\\
Receive array, $N_r$
    & \multicolumn{3}{c}{$1,2,3$, $x$-directed ULA, $\lambda/2$}\\
Selected PAs, $N_t$
    & \multicolumn{3}{c}{$3,4$}\\
Waveguide attenuation / index, $\alpha$ / $n_{\rm eff}$ \cite{10945421,11368709}
    & \multicolumn{3}{c}{0.01 dB/m / 1.4}\\
Pilot interval / count, $T_p$ / $N_p$
    & \multicolumn{3}{c}{0.5 ms / $N_t$}\\
Beacon samples per candidate, $L_p$
    & \multicolumn{3}{c}{$4$}\\
Modulation, PA-DSM / benchmark
    & \multicolumn{3}{c}{QPSK, 8-PSK / QPSK, 16-QAM}\\
\midrule
\textbf{Environment-Specific Parameter} & \textbf{Office} & \textbf{Factory} & \textbf{Canyon}\\
\midrule
Area, $D_x\times D_y$ [m]                       & $15\times6$ & $36\times22$ & $200\times40$\\
Waveguide height, $z_{\rm wg}$ [m]              & 2.8   & 10.0  & 12.0\\
UE position, $\mathbf{u}$ [m] & $(7,3,1.5)$ & $(18,10,1.5)$ & $(120,18,1.5)$\\
Available PAs, $N_{\rm all}$                    & 16    & 16    & 100\\
Maximum LoS distance, $d_{\rm cut}$ [m]         & 20    & 100   & 200\\
Path-loss exponent, $n_{\rm LoS}/n_{\rm NLoS}$  & 1.32/3.07 & 1.75/2.11 & 1.85/2.59\\
Shadowing, $\sigma_{\rm LoS}/\sigma_{\rm NLoS}$ [dB] & 2.66/9.03 & 3.10/3.29 & 4.05/8.78\\
\bottomrule
\end{tabularx}
\end{table}

To determine the large-scale gain of each link, the propagation loss is modeled by combining the close-in free-space reference term, the environment-dependent distance loss, waveguide attenuation, and shadow fading. The total path loss is \cite{10901735} \looseness=-1
\setlength{\abovedisplayskip}{4pt}
\setlength{\belowdisplayskip}{4pt}
\begin{equation}
\begin{aligned}
    PL_{\text{Total}}[\text{dB}]
    &= 32.4 + 20\log_{10}(f_c)
    + 10 n_m \log_{10}(d_{\text{free,i,j}}) \\
    &\quad + \alpha d_{\text{wg,j}}
    + \chi_{\sigma, m,i,j},
\end{aligned}
\end{equation}
where $f_c$ is in GHz, $d_{{\rm free},i,j}$ and $d_{{\rm wg},j}$ are in meters, and $\alpha$ is the waveguide attenuation coefficient in dB/m. The index $m\in\{{\rm LoS},{\rm NLoS}\}$ denotes the propagation state, and $n_m$ is the path-loss exponent for that state. The shadowing contribution in the dB domain is modeled as \looseness=-1
\setlength{\abovedisplayskip}{4pt}
\setlength{\belowdisplayskip}{4pt}
\begin{equation}
    \chi_{\sigma,m,i,j}
    \sim
    \mathcal{N}\left(0,\sigma_m^2\right),
    \label{eq:shadowing_distribution}
\end{equation} 
where $\sigma_m$ is the environment- and state-dependent shadowing standard deviation \cite{10901735,11368709}.

The small-scale fading is modeled using a Rician distribution where the $K$-factor is distance-dependent in an LoS environment and zero in an NLoS environment \cite{8645135}:
\begin{equation} 
    K_{i,j} =
    \begin{cases} 
    10^{1.3-0.003d_{{\rm free},i,j}}, & \text{for a LoS link},\\ 0, & \text{for an NLoS link}. 
    \end{cases} 
\label{eq:distance_dependent_k} 
\end{equation}

\vspace{-0.5em}

\subsection{Performance Evaluation Across Diverse FR3 Environments}
\label{subsec:environment_comparison} 

To evaluate the sensitivity of PA-DSM to the propagation environment, Fig.~\ref{fig:aber_all_envs} compares its BER in indoor-office, indoor-factory, and street-canyon configurations illustrated in Fig.~\ref{fig:env_illustrations}. The three configurations use the environment-specific area dimensions, waveguide heights, LoS probabilities, path-loss exponents, and shadowing parameters summarized in Table~\ref{tab:sim_params}. The noise variance $\sigma_n^2$ is fixed for these simulations. Thus, increasing the transmit power $P_t$ increases the nominal transmit SNR $\rho=P_t/\sigma_n^2$, while the geometry-dependent channel gain determines the received signal level at the UE.

At relatively low transmit powers, the indoor-office configuration provides the lowest BER among the considered environments. This behavior is consistent with the shorter propagation distances and the corresponding large-scale channel gains in the considered office layout. As $P_t$ increases, however, the relative ordering of the environmental curves changes: the indoor-factory configuration continues to exhibit a decreasing BER and provides the lowest BER in the higher-power portion of the evaluated range. The street-canyon configuration lies between the two indoor configurations over most of the considered transmit power values.

\begin{figure}[t]
    \centering
      \includegraphics[width=0.9\columnwidth]{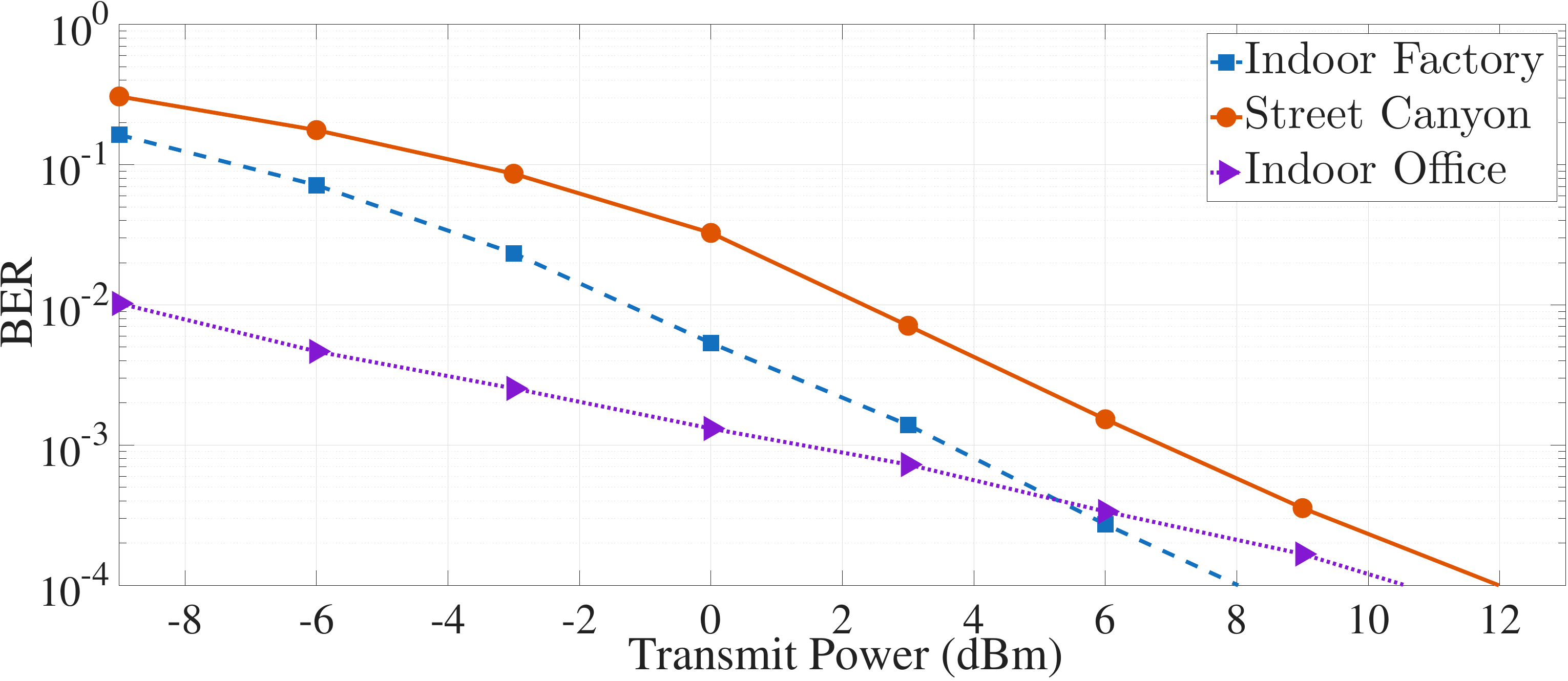}
    \caption{BER performance of PA-DSM in the indoor-office, indoor-factory, and street-canyon environments at $\eta=4$ bps/Hz with $N_t=4$.}
    \label{fig:aber_all_envs}
\end{figure}

The observed crossover indicates that received signal strength alone does not determine the PA-DSM error performance. At higher transmit powers, the distinguishability of the channel responses associated with different PA activation sequences also becomes an important consideration. In the adopted channel model, this distinguishability depends jointly on the selected PA locations, link distances, LoS/NLoS realizations, shadowing, Rician factors, and scattered-channel realizations. The environment-dependent combinations of these quantities can alter the pairwise distances between competing PA-DSM hypotheses and thereby change their relative BER performance. Isolating the specific contribution of spatial correlation or angular spread would require additional channel statistics not evaluated here.\looseness=-1

\vspace{-0.5em}
\subsection{RSSI-Assisted PA Selection Performance} 
\label{subsec:spatial_coverage} 
\vspace{-0.3em}



Fig.~\ref{fig:user_loc_aber} compares the BER versus the longitudinal UE position in the street canyon. All configurations activate $N_t=4$ PAs from the same $N_{\rm all}=100$ candidates: the proposed rule takes the four strongest RSSI values, the nearest-PA baseline takes the four smallest UE distances and hence assumes location knowledge, and the fixed baseline uses four uniformly spaced PAs without adaptation.

Both adaptive rules keep the BER in the $10^{-5}$ range over the full $200$~m span, whereas the fixed subset stays above $7\times10^{-2}$ everywhere; since a DSM block uses all $N_t$ PAs, its performance is set by the weakest active link, which is why the fixed curve degrades at both ends of the corridor. The proposed rule further outperforms nearest-PA selection by a factor of $2$ to $3$ in BER, because ranking by distance is blind to shadowing whereas $P_{r,n}$ is measured through the actual PA-waveguide path; relative to genie selection on the exact large-scale coefficients, the mean link-gain loss is $0.09$~dB, against $2.79$~dB for nearest-PA and $13.41$~dB for the fixed subset, and it is achieved from scalar power measurements without UE-location knowledge. As with the capacity- versus distance-based antenna selection distinction in conventional SM, this strength-oriented rule adapts on a slow timescale, whereas directly optimizing hypothesis separation would require a CSI-dependent distance-based criterion \cite{10685086}. 

\begin{figure}[!t] 
    \centering
    \includegraphics[width=0.82\columnwidth]{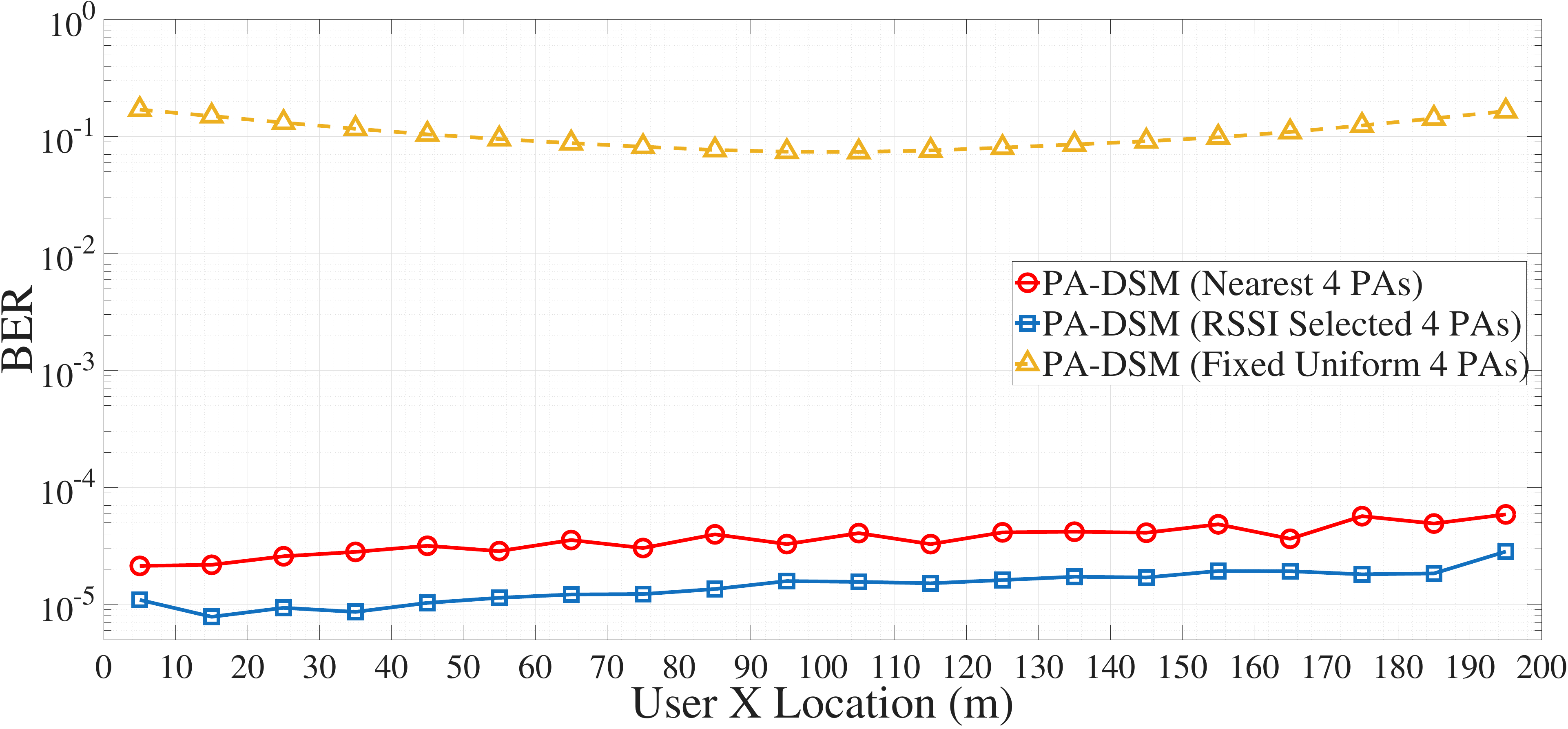}
    \caption{BER versus longitudinal UE position in the street canyon for three $N_t=4$ PA subsets drawn from the same $N_{\rm all}=100$ candidates. $N_r=2$, 8-PSK, $\eta=4$~bps/Hz, $P_t=15$~dBm.}
    \label{fig:user_loc_aber}
\end{figure}

\vspace{-0.5em}
\subsection{Robustness Comparison Against Benchmarks}
\label{subsec:robustness_comparison}

To evaluate the robustness of PA-DSM against representative coherent and non-coherent alternatives, Fig.~\ref{fig:ber_comparison}(a) compares PA-DSM with coherent SM and a coherent single-antenna baseline. All three schemes use the same candidate PA deployment, propagation environment, transmit-power budget, and target spectral efficiency of $\eta=4$ bps/Hz. PA-DSM and coherent SM use the same RSSI-selected subset of $N_t=4$ PAs, thereby avoiding a performance difference caused by using different active subsets. The single-antenna baseline instead transmits from the nearest available PA and does not convey index bits.

At $\eta=4$ bps/Hz, the single-antenna baseline carries all information through 16-ary quadrature amplitude modulation (QAM). Coherent SM uses $N_t=4$ PAs to convey two index bits and quadrature PSK (QPSK) to convey the remaining two bits per channel use. PA-DSM conveys $\lfloor\log_2(N_t!)\rfloor/N_t=1$ bps/Hz through differential permutation indexing and the remaining $3$ bps/Hz through 8-PSK.

As an initial detector-level verification, Fig.~\ref{fig:ber_comparison}(a) shows that the LC-ML and exhaustive-ML curves coincide over the evaluated transmit-power range. This agreement confirms that the PSK-specialized LC-ML detector returns the same minimizer of the differential metric as exhaustive joint search. Under perfect CSI, coherent SM requires approximately $3$ dB less transmit power than PA-DSM at the BER level indicated in the figure. This difference is consistent with the effective noise variance increase associated with differential detection \cite{6879496}. \looseness=-1

To examine sensitivity to channel estimation errors, the coherent benchmarks use the scale-consistent Gauss--Markov model \looseness=-1
\setlength{\abovedisplayskip}{8pt}
\setlength{\belowdisplayskip}{4pt}
\begin{equation}
    [\hat{\mathbf H}]_{i,j} = \varrho_{i,j}[\mathbf H]_{i,j}
    + \sqrt{1-\varrho_{i,j}^{2}}\,[\mathbf E_{\beta}]_{i,j},
    \label{eq:imperfect_csi_model}
\end{equation}
where $\hat{\mathbf H}$ is the channel available to the coherent detector,$\varrho_{i,j}\in[0,1]$ is the per-link estimation-quality coefficient, and $[\mathbf E_{\beta}]_{i,j}\sim \mathcal{CN}(0,\beta_{i,j})$ independently of $\mathbf H$. Thus, the error innovation follows the link-dependent large-scale channel power. Rather than fixing $\varrho_{i,j}$, we relate it to the pilot resources available to the coherent receiver, which two distinct mechanisms limit. The estimation-noise contribution is parameterized by the standard Gaussian minimum mean square error (MMSE) quality factor $\varrho_{\mathrm{est},i,j}^{2} =N_p\rho\beta_{i,j}/(1+N_p\rho\beta_{i,j})$ for $N_p$ pilot symbols at nominal SNR $\rho$, which approaches unity as $P_t$ increases. In addition, an estimate formed once per pilot interval $T_p$ is reused over that interval and therefore ages. Taking $T_p/2$ as the representative pilot-to-data separation, the Clarke temporal correlation is $\zeta=J_0(2\pi f_D T_p/2)=J_0(\pi f_D T_p)$, where $J_0(\cdot)$ is the zeroth-order Bessel function of the first kind and $f_D$ is the maximum Doppler frequency \cite{Clarke}. The effective per-link quality is modeled as $\varrho_{i,j}=\zeta\varrho_{\mathrm{est},i,j}$. Unlike the estimation-noise term, the aging term does not vanish with transmit power. PA-DSM does not incur this pilot-reuse loss, since differential detection references the immediately preceding received block. It remains sensitive to variation between adjacent blocks, but the largest separation between paired columns of $\mathbf Y_k$ and $\mathbf Y_{k-1}$ is $(2N_t-1)/W$ with $W$ the signal bandwidth, for which the correlation loss stays below $10^{-8}$ under the parameters of Table~\ref{tab:sim_params}. We set $N_p=N_t$, corresponding to one pilot symbol per active PA. Perfect CSI corresponds to $\varrho_{i,j}=1$.

\begin{figure}[!t]
    \centering
    \includegraphics[width=0.9\columnwidth]{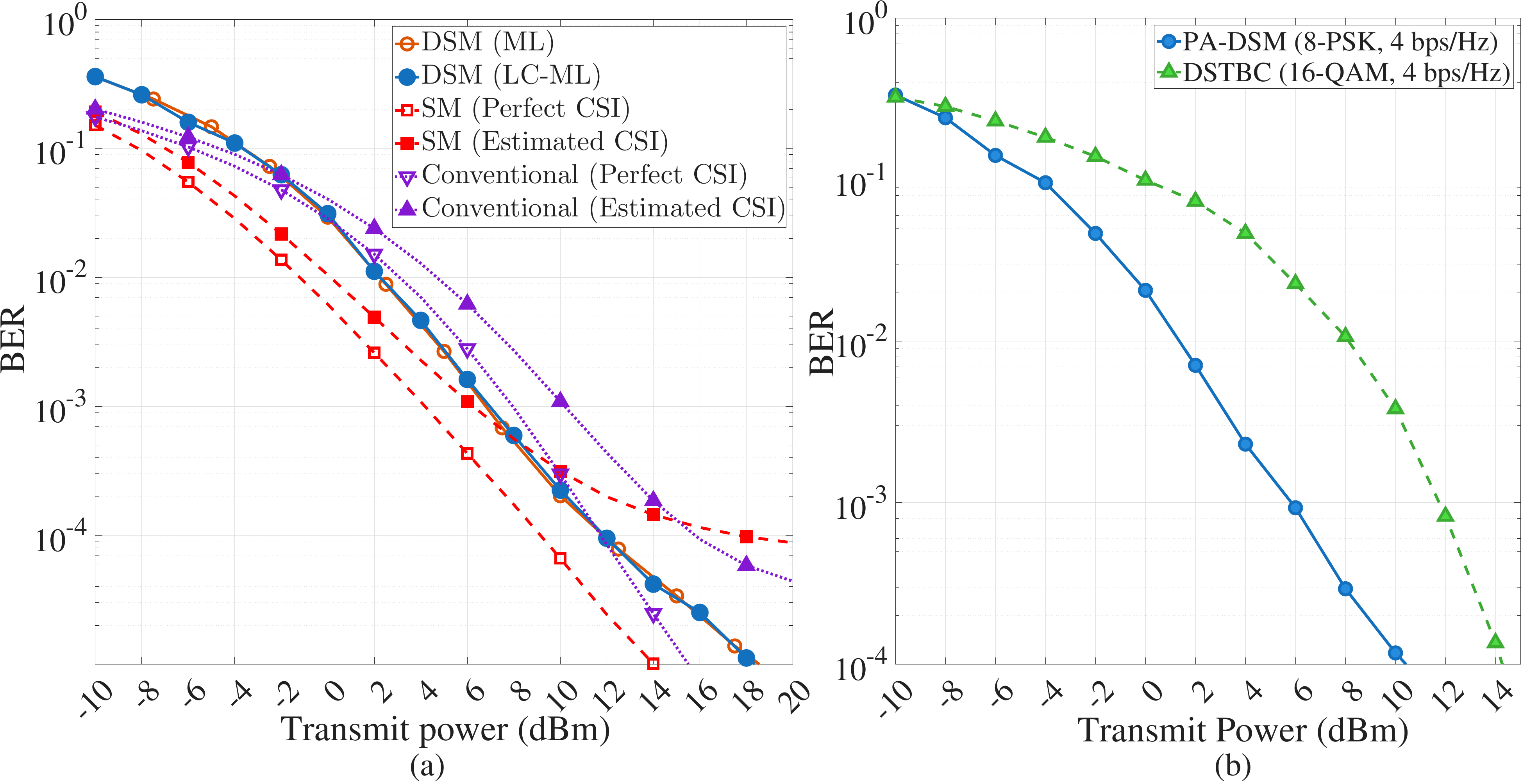}
    \caption{BER comparison at $\eta = 4$~bps/Hz: (a) PA-DSM and coherent
benchmarks under perfect CSI and under MMSE channel estimation from
$N_p = N_t$ pilots reused over a pilot interval $T_p = 0.5$~ms at a UE
speed of 1~m/s; (b) PA-DSM and same-waveguide DSTBC under \eqref{eq:dstbc_depletion} with $\alpha_{\rm e}=0.5$.}
    \label{fig:ber_comparison}    
\end{figure}

Pilot overhead is not deducted from the reported spectral efficiencies of the coherent benchmarks. With $N_p=N_t$ pilots per interval $T_p$, this amounts to $1.6\times10^{-4}$ of the channel uses under the parameters of Table~\ref{tab:sim_params}, so the omission favors the coherent schemes in net data rate only marginally. Under the adopted model, the estimation-noise contribution decreases with transmit power, so the coherent curves do not saturate for that reason. The flattening observed at high transmit power originates instead in channel aging, which caps the effective per-link quality at $\zeta^{2}/(1-\zeta^{2})=24.0$ dB for the same parameters, independently of $P_t$. PA-DSM does not use an explicit instantaneous channel estimate for data detection, and its BER continues to decrease over the same range. The result therefore quantifies a trade between the two detection strategies rather than a universal advantage of either. PA-DSM forgoes the coherent gain observed under perfect CSI in exchange for detection that requires no channel estimate, and it becomes preferable whenever the pilot interval cannot be made short relative to the channel coherence time.\looseness=-1

To compare PA-DSM with a non-coherent transmit-diversity alternative, Fig.~\ref{fig:ber_comparison}(b) presents the BER of PA-DSM and the considered differential space-time block code (DSTBC) benchmark \cite{5672371,965648}. Here, $P_t$ denotes the power supplied at the waveguide feed. The DSTBC benchmark simultaneously activates $L$ PAs on the same series-fed waveguide, with the radiated power of the $\ell$th PA modeled as
\setlength{\abovedisplayskip}{4pt}
\setlength{\belowdisplayskip}{4pt}
\begin{equation}
    P_{\ell} = \alpha_{\rm e}\left(1-\alpha_{\rm e}\right)^{\ell-1}P_t,\qquad \ell=1,\ldots,L,
    \label{eq:dstbc_depletion}
\end{equation}
where $\alpha_{\rm e}$ is the emitting ratio. Thus, its total radiated power is $P_t[1-(1-\alpha_{\rm e})^L]$; for $L=2$ and $\alpha_{\rm e}=0.5$, the two PAs radiate $0.5P_t$ and $0.25P_t$. By contrast, PA-DSM activates one PA per time slot, which is assumed to extract the available feed power. The 16-QAM DSTBC code matrices are normalized to satisfy the same average feed-power constraint as PA-DSM before differential encoding, and the non-coherent detector follows \cite{965648}. Hence, Fig.~\ref{fig:ber_comparison}(b) compares the two schemes under equal feed power while accounting for the depletion caused by simultaneous same-waveguide radiation. Under this specific architecture and power model, PA-DSM achieves the lower BER. Under an equal-radiated-energy convention, the DSTBC curves would shift by $10\log_{10}(1/0.75)=1.25$ dB, which does not alter the observed ordering.

\vspace{-0.5em}
\subsection{Impact of Antenna Scaling Under Rician Fading}
\label{subsec:antenna_scaling_canyon}

\begin{figure}[t]
    \centering
    \includegraphics[width=0.85\linewidth]{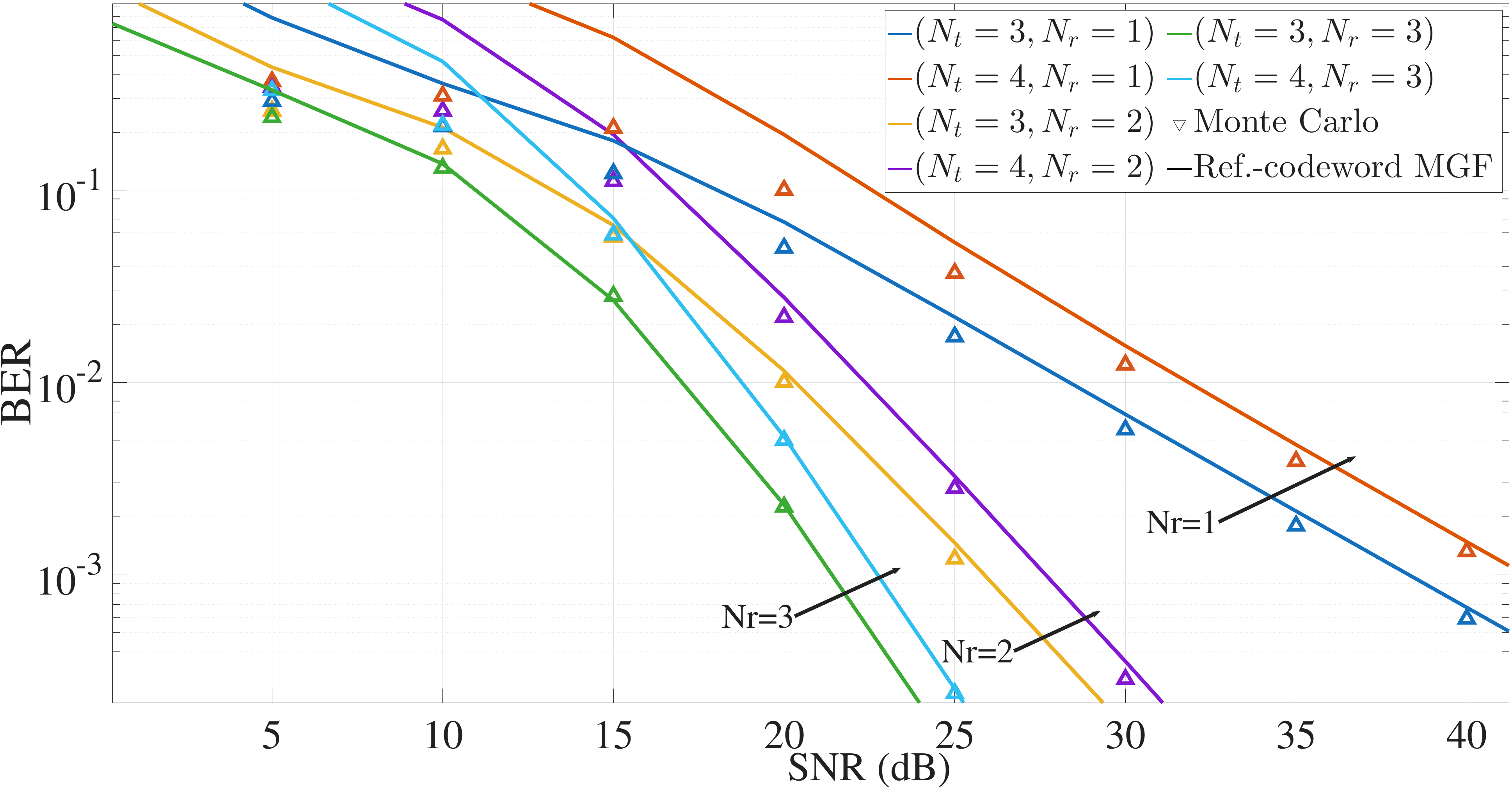}
    \caption{Monte Carlo BER and reference-codeword MGF characterization under normalized common-$K$ Rician fading with $K=12$ for $M=8$.}
    \label{fig:theory_vs_sim}
\end{figure}

Unlike the preceding geometry-dependent evaluations, this subsection isolates the effects of $N_t$ and $N_r$ under the Rician model adopted in Section~\ref{sec:theoretical_analysis}. Fig.~\ref{fig:theory_vs_sim} compares BER results with the MGF-based analytical characterization for $N_t\in\{3,4\}$ and $N_r\in\{1,2,3\}$. Both evaluations use the same Rician channel model. The analytical curves additionally employ the homogeneous LoS mean in \eqref{eq:homogeneous_rician_mean}, uniform averaging over the representative previous phase states in $\mathcal{F}$, and the reference transmitted information matrix defined in \eqref{eq:reference_codeword_characterization}. Accordingly, the analytical curves provide a reference-codeword characterization of the predicted high-SNR slope variation with $N_r$, rather than the complete codeword-averaged finite-SNR BER.

To examine the receive diversity behavior, consider first the curves for different values of $N_r$. Increasing $N_r$ from one to three reduces the BER and produces a steeper high-SNR slope over the evaluated SNR range. This trend is consistent with the pairwise polynomial SNR exponent $G_d=rN_r$ defined in \eqref{eq:pairwise_diversity}. At the codebook level, the employed PA-DSM codebook has $r_{\min}=1$, since two information matrices can share the same spatial permutation and differ in only one PSK symbol. The resulting codebook-level polynomial SNR exponent is therefore $G_{d,\min}=N_r$, as shown in \eqref{eq:pa_dsm_aber_diversity}. Pairwise error events with $r>1$ have larger polynomial SNR exponents, whereas the slowest-decaying terms in the analytical BEP are the rank-one events. Hence, the slope variation observed with $N_r$ is consistent with the receive diversity behavior predicted by the analysis. \looseness=-1

\begin{figure}[!t]
    \centering
    \includegraphics[width=0.9\columnwidth]{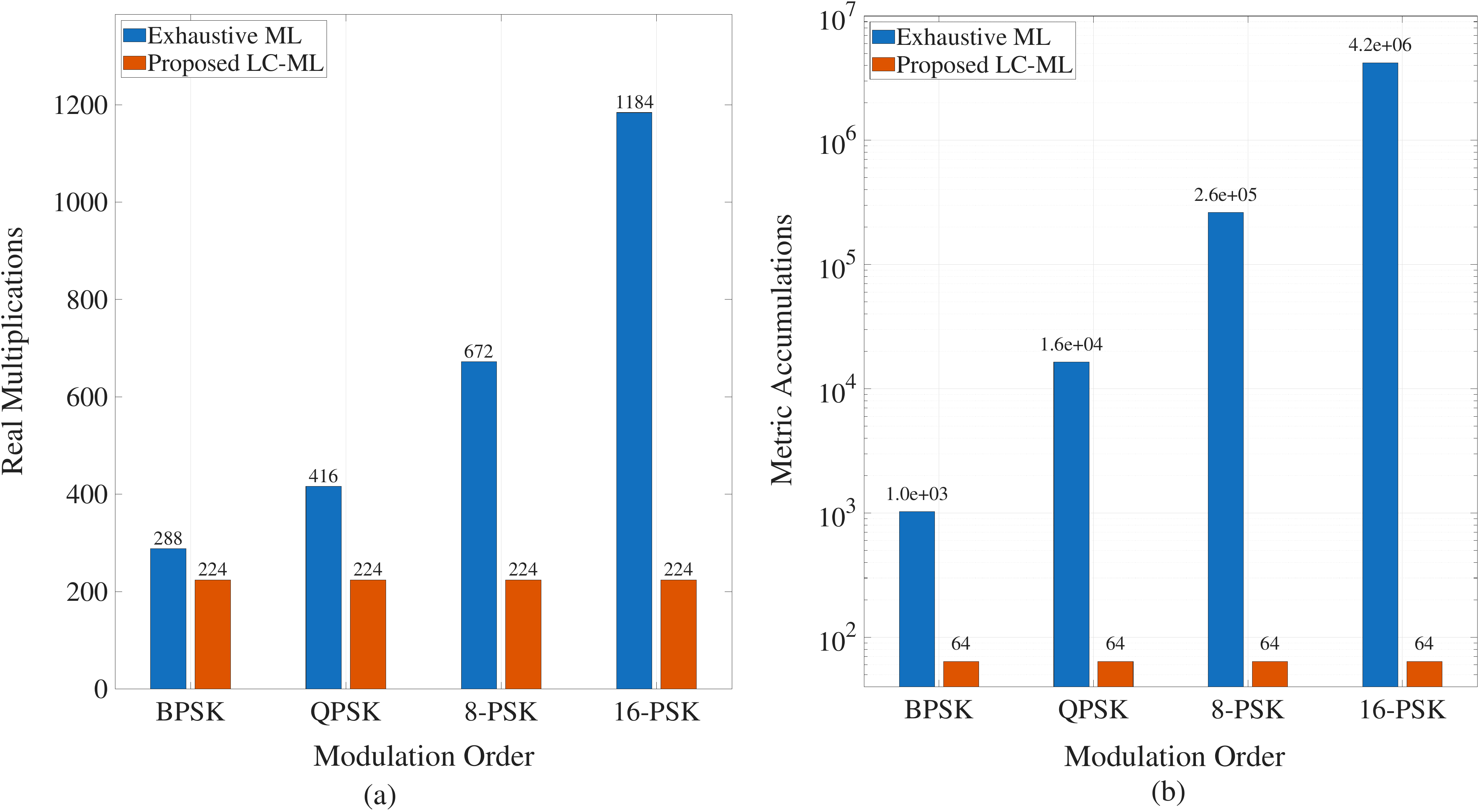}
    \caption{Complexity comparison of exhaustive and proposed LC-ML detection for different PSK orders: (a) real multiplications for column-pair metric preparation; (b) metric accumulations for joint-hypothesis evaluation, with $N_t=4$, $N_r=2$, and $Q=16$.}
    \label{fig:complexity_bar}
\end{figure}

To assess the effect of the active PA-set size, the curves for $N_t=3$ and $N_t=4$ are compared at each value of $N_r$. The two configurations do not operate at the same spectral efficiency. With $M=8$, \eqref{eq:spectral_eff} gives $\eta=3.67$ bps/Hz for $N_t=3$ and $\eta=4$ bps/Hz for $N_t=4$, and the two cannot be matched exactly, since $\lfloor\log_2(N_t!)\rfloor/N_t$ contributes $2/3$ for $N_t=3$ and $1$ for $N_t=4$ while $\log_2(M)$ is an integer. The comparison is therefore made at equal $M$, and $N_t=4$ conveys $9\%$ more information per channel use. At $N_r=1$, this higher rate costs approximately $1$ dB, whereas for $N_r\geq2$ the two configurations perform within $0.1$ dB of each other. Since $r_{\min}=1$ for both codebooks, increasing $N_t$ does not change the minimum codebook-level polynomial SNR exponent. The residual difference reflects the larger number of permutation hypotheses and pairwise error events, which enter the pairwise coding coefficient in \eqref{eq:coding_gain_final}. Additional receive antennas absorb the cost of the higher rate, and increasing the number of active PAs does not degrade PA-DSM performance once $N_r\geq2$.

The analytical and Monte Carlo curves differ in absolute BER because the analysis additionally invokes the high-SNR equivalent-noise approximation, whereas the Monte Carlo results evaluate the complete transmission and detection procedure over randomly generated information matrices and channel realizations. The difference is largest outside the high-SNR range in which that approximation is accurate. Nevertheless, the two sets of curves exhibit consistent slope variations with $N_r$, supporting the receive diversity trend predicted by the asymptotic analysis.\looseness=-1

\vspace{-0.5em}
\subsection{Complexity Analysis}
\vspace{-0.3em}
We compare the arithmetic requirements of the proposed detector with those of an exhaustive implementation. The comparison separately considers real multiplications (RMs) used to form the column-pair metrics and the metric accumulations required to evaluate the joint hypotheses.

The exhaustive detector searches over $N_{\rm hyp}=QM^{N_t}$ valid information matrices. Rather than recomputing the differential metric for every hypothesis, a practical implementation can pre-compute the column-pair symbol costs and reuse them across the joint search. Nevertheless, each of the $QM^{N_t}$ hypotheses requires the accumulation of $N_t$ costs, resulting in
\setlength{\abovedisplayskip}{6pt}
\setlength{\belowdisplayskip}{6pt}
\begin{equation} 
    C_{\rm ex}^{\rm acc} \approx QN_tM^{N_t} 
\label{eq:exhaustive_accumulations} 
\end{equation} 
metric accumulations, followed by the corresponding comparisons.
For the proposed detector, Phase~1 computes $N_t^2$ length-$N_r$ inner products, the norms of the $2N_t$ received columns, and the costs associated with the phase-quantized PSK symbols. Using four RMs per complex multiplication and two RMs per squared magnitude, the corresponding RM count is approximately \looseness=-1
\setlength{\abovedisplayskip}{4pt}
\setlength{\belowdisplayskip}{4pt}
\begin{equation} 
\begin{aligned} 
    C_{\rm LC\text{-}ML}^{\rm RM} &\approx 4N_rN_t^2 +4N_rN_t +4N_t^2, 
\end{aligned} 
\label{eq:lc_ml_rm_count} 
\end{equation}
excluding the implementation-dependent cost of PSK phase quantization. Phase~2 evaluates the $Q$ admissible permutations using \vspace{-0.5em}
\setlength{\abovedisplayskip}{6pt}
\setlength{\belowdisplayskip}{6pt}
\begin{equation} 
    C_{\rm LC\text{-}ML}^{\rm acc} \approx QN_t 
\label{eq:lc_ml_accumulations} 
\end{equation}
metric accumulations. Hence, with constant-time PSK quantization, the overall complexity order is $\mathcal O\left(N_rN_t^2+QN_t\right)$.
%
For $N_t=4$, $N_r=2$, $M=8$, and $Q=16$, \eqref{eq:lc_ml_rm_count} gives approximately $224$ RMs, while Phase~2 requires $64$ metric accumulations, against $QN_tM^{N_t}=262\,144$ for the exhaustive search. Fig.~\ref{fig:complexity_bar} compares these counts across modulation orders. The proposed detector therefore removes the $M^{N_t}$ joint
modulation-combination search while returning the same minimizer of \eqref{eq:ml_detector}.

\vspace{-0.5em}
\section{Conclusion}

\label{sec:conclusion}

This paper has proposed PA-DSM, a non-coherent system combining the spatial reconfigurability of waveguide-fed PAs with differential encoding. RSSI-assisted selection adapts the active PA subset to the UE location and large-scale propagation conditions without estimating the complete instantaneous channel matrix. The PSK-specialized LC-ML detector removes the $M^{N_t}$ joint modulation-combination search while preserving the minimizer of the exhaustive differential metric. Under normalized common-$K$ Rician fading, the MGF-based analysis separated the scattered and deterministic LoS contributions to the pairwise error probability. It has shown that the previous differential state preserves the pairwise rank and eigenvalues but can change the LoS contribution through spatial alignment. The analysis has further identified the rank-dependent polynomial SNR exponent and the minimum-rank behavior of the PA-DSM codebook. Results across representative FR3 environments have demonstrated effective spatial adaptation, favorable performance relative to coherent benchmarks when their channel estimates age between pilot transmissions, and substantial complexity reduction.

\appendices
\vspace{-1em}
\section{Proof of Lemma~\ref{lem:mgf}}
\label{app:lemma_mgf}

Using \eqref{eq:effective_distance} and \eqref{eq:effective_difference_matrix}, the channel-dependent squared distance between the transmitted and competing hypotheses can be expressed as
\setlength{\abovedisplayskip}{4pt}
\setlength{\belowdisplayskip}{4pt}
\begin{equation} 
    \xi = \left\| \mathbf{H}\mathbf{\Delta}_{\rm eff} \right\|_F^2 = \operatorname{Tr}\left( \mathbf{H} \mathbf{\Delta}_{\rm eff} \mathbf{\Delta}_{\rm eff}^H \mathbf{H}^H \right).
    \label{eq:app_xi_trace} 
\end{equation}
Substituting the eigendecomposition in (\ref{eq:effective_difference_evd}) into (\ref{eq:app_xi_trace}) gives
\setlength{\abovedisplayskip}{4pt}
\setlength{\belowdisplayskip}{4pt}
\begin{equation}
    \xi =
    \operatorname{Tr}\left(
    \mathbf{H}
    \mathbf{V}\mathbf{\Lambda}\mathbf{V}^H
    \mathbf{H}^H
    \right) =
    \sum_{i=1}^{r}
    \lambda_i
    \left\|
    \mathbf{H}\mathbf{v}_i
    \right\|^2.
    \label{eq:app_xi_eigenmodes}
\end{equation}
For each nonzero eigenmode, define $\mathbf{g}_i\triangleq\mathbf{H}\mathbf{v}_i =\bar{\mathbf{H}}\mathbf{v}_i+ \widetilde{\mathbf{H}}\mathbf{v}_i$.
Since $\mathbf{v}_i$ has unit norm and the entries of $\widetilde{\mathbf{H}}$ are i.i.d. according to $\mathcal{CN}(0,\sigma_h^2)$, the scattered NLoS component projected onto $\mathbf{v}_i$ satisfies \looseness=-1
\setlength{\abovedisplayskip}{6pt}
\setlength{\belowdisplayskip}{6pt}
\begin{equation}
    \widetilde{\mathbf{H}}\mathbf{v}_i \sim \mathcal{CN}\left( \mathbf{0},\sigma_h^2\mathbf{I}_{N_r}\right). \vspace{-0.5em}
    \label{eq:app_projected_scattered_distribution}
\end{equation}
Thus, conditioned on the previous differential state $\mathbf{S}_{k-1}$, 
\setlength{\abovedisplayskip}{4pt}
\setlength{\belowdisplayskip}{4pt}
\begin{equation}
    \mathbf{g}_i \sim \mathcal{CN}\left(\bar{\mathbf{H}}\mathbf{v}_i,\sigma_h^2\mathbf{I}_{N_r}\right).
    \label{eq:app_projected_channel_distribution}
\end{equation}
The projected vectors associated with distinct eigenmodes are mutually independent. Specifically, for $i\neq j$, their cross-covariance is $\mathbb{E}\left[\left(\widetilde{\mathbf{H}}\mathbf{v}_i\right)\left(\widetilde{\mathbf{H}}\mathbf{v}_j \right)^H\right] = \sigma_h^2\left(\mathbf{v}_j^H\mathbf{v}_i\right)\mathbf{I}_{N_r}=\mathbf{0},$
where the last equality follows from the orthogonality of $\mathbf v_i$ and $\mathbf v_j$. Since the projected vectors are jointly complex Gaussian, zero cross-covariance implies independence. Therefore, $\mathbf g_1,\ldots,\mathbf g_r$ are mutually independent conditioned on $\mathbf S_{k-1}$.

Next, we define $\zeta_i \triangleq \frac{2}{\sigma_h^2} \left\| \mathbf{g}_i\right\|^2$. From \eqref{eq:app_projected_channel_distribution}, $\zeta_i$ follows a real noncentral chi-square distribution with $2N_r$ degrees of freedom, and the noncentrality parameter is determined by the deterministic LoS component and is given by $\delta_i = \frac{2}{\sigma_h^2} \left\|\bar{\mathbf{H}}\mathbf{v}_i \right\|^2 =\frac{ 2\mu_i(\mathbf{S}_{k-1})}{\sigma_h^2}$,
where $\mu_i(\mathbf{S}_{k-1}) \triangleq \left\| \bar{\mathbf{H}}\mathbf{v}_i \right\|^2$, as defined in \eqref{eq:noncentrality_effective}. Hence, $\|\mathbf{g}_i\|^2$ is a scaled noncentral chi-square random variable.
Using the MGF of the noncentral chi-square variable $\zeta_i$, the $i$th eigenmode contribution $\lambda_i\|\mathbf{g}_i\|^2$ has the MGF \looseness=-1
\setlength{\abovedisplayskip}{4pt}
\setlength{\belowdisplayskip}{4pt}
\begin{equation}
\begin{aligned}
    M_i(\nu)&\triangleq \mathbb{E}\left[\exp\left(\nu\lambda_i\|\mathbf{g}_i\|^2\right)\mid\mathbf{S}_{k-1}
    \right]
    \\
    &=\left(1-\nu\lambda_i\sigma_h^2\right)^{-N_r}\exp\left(\frac{\nu\lambda_i\mu_i(\mathbf{S}_{k-1})}{
    1-\nu\lambda_i\sigma_h^2}\right).
\end{aligned}
\label{eq:app_eigenmode_mgf}
\end{equation}
The MGF exists for $\nu<1/(\lambda_i\sigma_h^2)$, which follows from the convergence condition of the noncentral chi-square MGF.

Finally, since the eigenmode contributions in \eqref{eq:app_xi_eigenmodes} are mutually independent conditioned on $\mathbf{S}_{k-1}$, the conditional MGF of $\xi$ is the product of their individual MGFs 
\setlength{\abovedisplayskip}{4pt}
\setlength{\belowdisplayskip}{4pt}
\begin{equation}
    M_{\xi\mid\mathbf{S}_{k-1}}(\nu)=\prod_{i=1}^{r}M_i(\nu).
    \label{eq:app_mgf_product}
\end{equation}
This expression is valid for $\nu<1/({\lambda_{\max}\sigma_h^2})$, where $\lambda_{\max}=\mathop{\mathrm{max}}\limits_{1\leq i\leq r}\lambda_i$ ensures the convergence of all eigenmode MGFs.
Equations \eqref{eq:app_eigenmode_mgf} and
\eqref{eq:app_mgf_product} yield \eqref{eq:mgf_exact}, completing the proof. 

\vspace{-0.5em}
\section{Proof of Lemma \ref{lem:perm_invariance}} 
\label{app:state_reduction}

Each information matrix can be written as $\mathbf X_k=\mathbf A_{q_k}\mathbf D_k$, where $\mathbf A_{q_k}$ is a permutation matrix and $\mathbf D_k$ is a diagonal PSK matrix. Since the product of monomial matrices remains monomial, the accumulated differential state admits the decomposition $\mathbf S_{k-1}=\mathbf P_{k-1}\mathbf\Phi_{k-1}$, where $\mathbf P_{k-1}$ is the accumulated permutation matrix and $\mathbf\Phi_{k-1}$ is a diagonal matrix containing the accumulated PSK phases. From Lemma~\ref{lem:mgf}, the state dependence of the conditional PEP is introduced only through the LoS-related terms
\setlength{\abovedisplayskip}{4pt}
\setlength{\belowdisplayskip}{4pt}
\begin{equation}
    \mu_i(\mathbf S_{k-1}) = \lvert| \bar{\mathbf H}\mathbf v_i \rvert|^2,
\end{equation}
where $\mathbf v_i$ denotes the eigenvectors of
$\mathbf S_{k-1}\mathbf\Delta\mathbf\Delta^H\mathbf S_{k-1}^H$.
Since $\mathbf S_{k-1}$ is unitary,
\setlength{\abovedisplayskip}{4pt}
\setlength{\belowdisplayskip}{4pt}
\begin{equation}
    \mathbf S_{k-1}\mathbf\Delta\mathbf\Delta^H \mathbf S_{k-1}^H = \mathbf P_{k-1} \mathbf\Phi_{k-1} \mathbf\Delta\mathbf\Delta^H \mathbf\Phi_{k-1}^H \mathbf P_{k-1}^H.
    \label{eq:71}
\end{equation}

Therefore, $\mathbf v_i =\mathbf P_{k-1}\mathbf\Phi_{k-1}\mathbf u_i$, where $\mathbf u_i$ is the corresponding eigenvector of $\mathbf\Delta\mathbf\Delta^H$. 
Under the homogeneous LoS model in \eqref{eq:homogeneous_rician_mean}, $\bar{\mathbf H}\mathbf P_{k-1}=\bar{\mathbf H}$ for any permutation matrix $\mathbf P_{k-1}$.
Consequently,\looseness=-1
\setlength{\abovedisplayskip}{4pt}
\setlength{\belowdisplayskip}{4pt}
\begin{equation}
    \mu_i(\mathbf S_{k-1})=\left|\bar{\mathbf H}\mathbf P_{k-1}\mathbf\Phi_{k-1}\mathbf u_i\right|^2=\left|\bar{\mathbf H}\mathbf\Phi_{k-1}\mathbf u_i\right|^2,
    \label{eq:LoS_cont_term}
\end{equation}
which is independent of $\mathbf P_{k-1}$.

Since the remaining terms in the conditional PEP expression in \eqref{eq:explicit_channel_averaged_pep} depend only on the pairwise difference eigenvalues and channel parameters, the PEP is invariant to the permutation component of the previous differential state.

\vspace{-0.6cm}
\section{Proof of Theorem \ref{thm:asymptotic}}
\label{app:asymptotic_pep}
\vspace{-0.5em}
Consider a fixed representative phase state $\mathbf{\Phi}_{k-1}\in\mathcal F$. By Lemma~\ref{lem:perm_invariance}, the PEP is independent of the permutation component of the previous differential state. Hence, \eqref{eq:mgf_pep} can be evaluated by setting $\mathbf{S}_{k-1}=\boldsymbol{\Phi}_{k-1}$, for which the eigenvectors in Lemma~\ref{lem:mgf} become $\mathbf{v}_i=\mathbf{\Phi}_{k-1}\mathbf{u}_i$ with $\mathbf{\Delta}\mathbf{\Delta}^{H}\mathbf{u}_i=\lambda_i\mathbf{u}_i$, so the eigenvalues are unchanged. For $0<\theta\leq\pi/2$, define \looseness=-1
%
\setlength{\abovedisplayskip}{4pt}
\setlength{\belowdisplayskip}{4pt}
\begin{equation} 
    \nu_{\theta} \triangleq -\frac{\rho}{8\sin^2\theta} \leq -\frac{\rho}{8}, \label{eq:app_mgf_argument_bound} 
\end{equation} 
where the inequality follows from $\sin^2\theta\leq1$. Since $\xi\geq0$ by \eqref{eq:effective_distance}, the conditional MGF is nondecreasing in its real argument:\looseness=-1
\setlength{\abovedisplayskip}{4pt}
\setlength{\belowdisplayskip}{4pt}
\begin{equation} 
\begin{aligned} 
    \frac{\mathrm d}{\mathrm d\nu} M_{\xi\mid\mathbf\Phi_{k-1}}(\nu) &= \mathbb E_{\mathbf H} \left[ \xi\exp(\nu\xi) \mid\mathbf\Phi_{k-1} \right]\geq0. 
\end{aligned} 
\label{eq:app_mgf_monotonicity} 
\end{equation}

It follows from \eqref{eq:app_mgf_argument_bound} and \eqref{eq:app_mgf_monotonicity} that
\setlength{\abovedisplayskip}{4pt}
\setlength{\belowdisplayskip}{4pt}
\begin{equation}
   M_{\xi\mid\mathbf\Phi_{k-1}}(\nu_{\theta}) \leq M_{\xi\mid\mathbf\Phi_{k-1}} \left( -\frac{\rho}{8} \right).
   \label{eq:app_mgf_pointwise_bound}
\end{equation}

Applying this pointwise bound to the integral in \eqref{eq:mgf_pep} gives
\setlength{\abovedisplayskip}{4pt}
\setlength{\belowdisplayskip}{4pt}
\begin{equation}
\begin{aligned}
    &P_{\mathrm a} \left( \mathbf X\rightarrow\hat{\mathbf X} \mid\mathbf\Phi_{k-1} \right)\leq \frac{1}{2} M_{\xi\mid\mathbf\Phi_{k-1}} \left( -\frac{\rho}{8} \right).
\end{aligned}
\label{eq:app_chernoff_bound}
\end{equation}

We next evaluate the MGF on the right-hand side of \eqref{eq:app_chernoff_bound} as $\rho\rightarrow\infty$. Substituting $\nu=-\rho/8$ into the scattered-channel factor in \eqref{eq:mgf_exact} gives

\setlength{\abovedisplayskip}{4pt}
\setlength{\belowdisplayskip}{4pt}
\begin{equation}
\begin{aligned}
    1-\nu\lambda_i\sigma_h^2 &= 1+\frac{\rho}{8}\lambda_i\sigma_h^2= \frac{\rho}{8}\lambda_i\sigma_h^2 \left[ 1+o(1) \right].
\end{aligned}
\label{eq:app_polynomial_factor}
\end{equation}
Therefore, 
\setlength{\abovedisplayskip}{4pt}
\setlength{\belowdisplayskip}{4pt}
\begin{equation}
    \prod_{i=1}^{r} \left( 1+\frac{\rho}{8}\lambda_i\sigma_h^2 \right)^{-N_r}= \left[ \frac{\rho}{8} \left( \prod_{i=1}^{r}\lambda_i \right)^{1/r} \sigma_h^2 \right]^{-rN_r} \left[ 1+o(1) \right].
\label{eq:app_polynomial_product}
\end{equation}

For the deterministic LoS factor in \eqref{eq:mgf_exact}, substituting $\nu=-\rho/8$ gives \looseness=-1

\setlength{\abovedisplayskip}{7pt}
\setlength{\belowdisplayskip}{7pt}
\begin{equation} 
\begin{aligned} 
    &\exp \left( -\frac{ (\rho/8)\lambda_i \mu_i(\mathbf\Phi_{k-1}) }{ 1+(\rho/8)\lambda_i\sigma_h^2 } \right)= \exp \left( -\frac{ \mu_i(\mathbf\Phi_{k-1}) }{ \sigma_h^2+ \dfrac{8}{\rho\lambda_i} } \right)\\ &\quad= \exp \left( -\frac{ \mu_i(\mathbf\Phi_{k-1}) }{ \sigma_h^2 } \right) \left[ 1+o(1) \right], \qquad \rho\rightarrow\infty. 
\end{aligned} 
\label{eq:app_exponential_factor} 
\end{equation}
Substituting \eqref{eq:app_polynomial_product} and \eqref{eq:app_exponential_factor} into \eqref{eq:mgf_exact} yields

\setlength{\abovedisplayskip}{6pt}
\setlength{\belowdisplayskip}{6pt}
\begin{equation}
\begin{aligned}
    &M_{\xi\mid\mathbf\Phi_{k-1}} \left( -\frac{\rho}{8} \right)= \left[ \frac{\rho}{8} \left( \prod_{i=1}^{r}\lambda_i \right)^{1/r} \sigma_h^2 \right]^{-rN_r}\\ &\quad\times \exp \left( -\sum_{i=1}^{r} \frac{ \mu_i(\mathbf\Phi_{k-1}) }{ \sigma_h^2 } \right) \left[ 1+o(1) \right].
\end{aligned}
\label{eq:app_mgf_asymptote}
\end{equation}

Finally, substituting \eqref{eq:app_mgf_asymptote} into \eqref{eq:app_chernoff_bound} and using the definitions of $G_d$, $G_c$, and $\Gamma_{\rm PA}(\mathbf\Phi_{k-1})$ in \eqref{eq:pairwise_diversity}, \eqref{eq:coding_gain_final}, and \eqref{eq:pa_decay_general}, respectively, gives \looseness=-1 
\setlength{\abovedisplayskip}{4pt}
\setlength{\belowdisplayskip}{4pt}
\begin{equation}
\begin{aligned}
    &P_{\mathrm a} \left( \mathbf X\rightarrow\hat{\mathbf X} \mid\mathbf\Phi_{k-1} \right)\\ &\quad\leq \frac{1}{2} \left( G_c\rho \right)^{-G_d} \exp \left[ -\Gamma_{\rm PA}(\mathbf\Phi_{k-1}) \right] \left[ 1+o(1) \right],
\end{aligned}
\end{equation}
which proves Theorem~\ref{thm:asymptotic}.

\vspace{-0.5em}

\bibliographystyle{IEEEtran}
\vspace{-0.5em}
\bibliography{references}

\begin{thebibliography}{10}
\providecommand{\url}[1]{#1}
\csname url@samestyle\endcsname
\providecommand{\newblock}{\relax}
\providecommand{\bibinfo}[2]{#2}
\providecommand{\BIBentrySTDinterwordspacing}{\spaceskip=0pt\relax}
\providecommand{\BIBentryALTinterwordstretchfactor}{4}
\providecommand{\BIBentryALTinterwordspacing}{\spaceskip=\fontdimen2\font plus
\BIBentryALTinterwordstretchfactor\fontdimen3\font minus
  \fontdimen4\font\relax}
\providecommand{\BIBforeignlanguage}[2]{{%
\expandafter\ifx\csname l@#1\endcsname\relax
\typeout{** WARNING: IEEEtran.bst: No hyphenation pattern has been}%
\typeout{** loaded for the language `#1'. Using the pattern for}%
\typeout{** the default language instead.}%
\else
\language=\csname l@#1\endcsname
\fi
#2}}
\providecommand{\BIBdecl}{\relax}
\BIBdecl

\bibitem{11594671}
Y.~Akar \emph{et~al.}, ``Pinching antenna-assisted differential spatial
  modulation,'' in \emph{Proc. 32nd Int. Conf. Telecommun. (ICT)},
  Thessaloniki, Greece, May 20--22, 2026.

\bibitem{11456641}
W.~Chen \emph{et~al.}, ``Toward standardization of {6G} and {NextG}: Key
  technologies to enable fundamental enhancements,'' \emph{IEEE J. Sel. Areas
  Commun.}, vol.~44, pp. 4333--4365, 2026.

\bibitem{7448967}
E.~Basar, ``On multiple-input multiple-output {OFDM} with index modulation for
  next generation wireless networks,'' \emph{IEEE Trans. Signal Process.},
  vol.~64, no.~15, pp. 3868--3878, 2016.

\bibitem{10659349}
A.~T. Dogukan, E.~Arslan, and E.~Basar, ``Reconfigurable intelligent
  surface-enabled downlink {NOMA},'' \emph{IEEE Trans. Wireless Commun.},
  vol.~23, no.~11, pp. 16\,950--16\,961, 2024.

\bibitem{10373568}
E.~Basar, ``Noise modulation,'' \emph{IEEE Wireless Commun. Lett.}, vol.~13,
  no.~3, pp. 844--848, 2024.

\bibitem{11032161}
E.~Yapici \emph{et~al.}, ``Noise modulation over wireless energy transfer:
  {JEIH-NoiseMod},'' \emph{IEEE Wireless Commun. Lett.}, vol.~14, no.~9, pp.
  2768--2772, 2025.

\bibitem{8796365}
E.~Basar \emph{et~al.}, ``Wireless communications through reconfigurable
  intelligent surfaces,'' \emph{IEEE Access}, vol.~7, pp. 116\,753--116\,773,
  2019.

\bibitem{10318061}
L.~Zhu, W.~Ma, and R.~Zhang, ``Modeling and performance analysis for movable
  antenna enabled wireless communications,'' \emph{IEEE Trans. Wireless
  Commun.}, vol.~23, no.~6, pp. 6234--6250, 2024.

\bibitem{9264694}
K.-K. Wong, A.~Shojaeifard, K.-F. Tong, and Y.~Zhang, ``Fluid antenna
  systems,'' \emph{IEEE Trans. Wireless Commun.}, vol.~20, no.~3, pp.
  1950--1962, 2021.

\bibitem{NTTDOCOMO}
A.~Fukuda \emph{et~al.}, ``Pinching antenna---{U}sing a dielectric waveguide as
  an antenna,'' \emph{NTT DOCOMO Tech. J.}, vol.~23, no.~3, pp. 5--12, Jan.
  2022.

\bibitem{10605910}
D.~Shakya \emph{et~al.}, ``Comprehensive {FR1(C)} and {FR3} lower and upper
  mid-band propagation and material penetration loss measurements and channel
  models in indoor environment for {5G} and {6G},'' \emph{IEEE Open J. Commun.
  Soc.}, vol.~5, pp. 5192--5218, Jul. 2024.

\bibitem{11327450}
H.~Xu \emph{et~al.}, ``Near-field propagation and spatial non-stationarity
  channel model for 6–24~{GHz} ({FR3}) extremely large-scale {MIMO}: Adopted
  by {3GPP} for {6G},'' \emph{IEEE J. Sel. Areas Commun.}, vol.~44, pp.
  3201--3218, 2026.

\bibitem{11172334}
M.~Samy \emph{et~al.}, ``Pinching antenna systems versus reconfigurable
  intelligent surfaces in {mmWave},'' \emph{IEEE Wireless Commun. Lett.},
  vol.~14, no.~12, pp. 4022--4026, 2025.

\bibitem{11368709}
S.~Yang \emph{et~al.}, ``Pinching-antenna-assisted index modulation: Channel
  modeling, transceiver design, and performance analysis,'' \emph{IEEE Trans.
  Wireless Commun.}, vol.~25, pp. 11\,497--11\,512, 2026.

\bibitem{he2025risassisteddownlinkpinchingantennasystems}
C.~He \emph{et~al.}, ``{RIS}-assisted downlink pinching-antenna systems:
  {GNN}-enabled optimization approaches,'' arXiv:2511.20305, 2025.

\bibitem{11204499}
B.~Zhuo \emph{et~al.}, ``{P-NOMA} for pinching-antenna systems {(PASS)},''
  \emph{IEEE Wireless Commun. Lett.}, vol.~15, pp. 131--135, 2026.

\bibitem{11414143}
A.~Bereyhi \emph{et~al.}, ``{MIMO-PASS}: Uplink and downlink transmission via
  {MIMO} pinching-antenna systems,'' \emph{IEEE Trans. Commun.}, vol.~74, pp.
  5701--5716, 2026.

\bibitem{11303890}
E.~Illi, M.~Qaraqe, and A.~Ghrayeb, ``Secure pinching antenna-aided {ISAC},''
  \emph{IEEE Commun. Lett.}, vol.~30, pp. 727--731, 2026.

\bibitem{11205176}
Y.~Zhong \emph{et~al.}, ``Physical layer security for pinching-antenna systems
  via index and directional modulation,'' \emph{IEEE Wireless Commun. Lett.},
  vol.~15, pp. 230--234, 2026.

\bibitem{11314615}
M.~Liu \emph{et~al.}, ``Integrated sensing and communication with index
  modulation over pinching antennas,'' \emph{IEEE Commun. Lett.}, vol.~30, pp.
  737--741, 2026.

\bibitem{8004416}
E.~Basar \emph{et~al.}, ``Index modulation techniques for next-generation
  wireless networks,'' \emph{IEEE Access}, vol.~5, pp. 16\,693--16\,746, Sep.
  2017.

\bibitem{11160744}
M.~Ying \emph{et~al.}, ``Upper mid-band channel measurements and
  characterization at 6.75 {GHz} {FR1}({C}) and 16.95 {GHz} {FR3} in an indoor
  factory scenario,'' in \emph{Proc. IEEE Int. Conf. Commun. (ICC)}, Montreal,
  QC, Canada, Jun. 8--12, 2025, pp. 3303--3308.

\bibitem{11161884}
D.~Shakya \emph{et~al.}, ``Urban outdoor propagation measurements and channel
  models at {6.75 GHz FR1(C)} and {16.95 GHz FR3} upper mid-band spectrum for
  {5G} and {6G},'' in \emph{Proc. {IEEE} Int. Conf. Commun. ({ICC})}, Montreal,
  QC, Canada, Jun. 8--12, 2025, pp. 3291--3296.

\bibitem{11184847}
P.~Zhang \emph{et~al.}, ``Fluid antenna-assisted rectangular differential index
  modulation: A non-coherent system design, optimization, and performance
  analysis,'' \emph{IEEE J. Sel. Areas Commun.}, vol.~44, pp. 1307--1321, 2026.

\bibitem{Tao_DSM_PA}
Y.~Tao \emph{et~al.}, ``Differential spatial modulation with transmit diversity
  for pinching-antenna systems,'' arXiv:2605.04578, 2026.

\bibitem{10855589}
R.-Y. Wei and C.-Y. Chen, ``Low-complexity maximum-likelihood detectors
  incorporating pre-calculated symbol metrics for differential spatial
  modulation,'' \emph{IEEE Wireless Commun. Lett.}, vol.~14, no.~4, pp.
  1124--1128, Apr. 2025.

\bibitem{11202577}
Z.~Wang \emph{et~al.}, ``Modeling and beamforming optimization for
  pinching-antenna systems,'' \emph{IEEE Trans. Commun.}, vol.~73, no.~12, pp.
  13\,904--13\,919, Dec. 2025.

\bibitem{11657464}
K.~Wang, Z.~Ding, and L.~Hanzo, ``Generalized pinching-antenna systems: A
  leaky-coaxial-cable perspective,'' \emph{IEEE Trans. Commun.}, vol.~74, pp.
  12\,774--12\,787, 2026.

\bibitem{Zhang2025PASS_RSSI}
Y.~Zhang \emph{et~al.}, ``Pinching-antenna systems {(PASS)}-based indoor
  positioning,'' arXiv:2508.08185, 2025.

\bibitem{6879496}
Y.~Bian \emph{et~al.}, ``Differential spatial modulation,'' \emph{IEEE Trans.
  Veh. Technol.}, vol.~64, no.~7, pp. 3262--3268, Jul. 2015.

\bibitem{258319}
J.~W. Craig, ``A new, simple and exact result for calculating the probability
  of error for two-dimensional signal constellations,'' in \emph{Proc. IEEE
  Mil. Commun. Conf. (MILCOM)}, vol.~2, McLean, VA, USA, Nov. 04--07, 1991, pp.
  571--575.

\bibitem{simon2005digital}
M.~K. Simon and M.-S. Alouini, \emph{Digital Communication over Fading
  Channels}, 2nd~ed.\hskip 1em plus 0.5em minus 0.4em\relax Hoboken, NJ, USA:
  Wiley, 2005.

\bibitem{10901735}
D.~Shakya \emph{et~al.}, ``Propagation measurements and channel models in
  indoor environment at 6.75 {GHz} {FR1}({C}) and 16.95 {GHz} {FR3} upper-mid
  band spectrum for {5G} and {6G},'' in \emph{Proc. IEEE Global Commun. Conf.
  (GLOBECOM)}, Cape Town, South Africa, Dec. 8--12, 2024, pp. 998--1003.

\bibitem{8620255}
{\"O}.~{\"O}zdogan, E.~Bj{\"o}rnson, and E.~G. Larsson, ``Massive {MIMO} with
  spatially correlated {Rician} fading channels,'' \emph{IEEE Trans. Commun.},
  vol.~67, no.~5, pp. 3234--3250, May 2019.

\bibitem{10945421}
Z.~Ding, R.~Schober, and H.~V. Poor, ``Flexible-antenna systems: A
  pinching-antenna perspective,'' \emph{IEEE Trans. Commun.}, vol.~73, no.~10,
  pp. 9236--9253, Oct. 2025.

\bibitem{8645135}
{\"O}.~{\"O}zdogan, E.~Bj{\"o}rnson, and J.~Zhang, ``Cell-free massive {MIMO}
  with {Rician} fading: Estimation schemes and spectral efficiency,'' in
  \emph{Proc. 52nd Asilomar Conf. Signals, Syst., Comput.}, Pacific Grove, CA,
  USA, Oct. 28--31, 2018, pp. 975--979.

\bibitem{10685086}
B.~A. Ozden and E.~Aydin, ``Antenna selection for receive spatial modulation
  system empowered by reconfigurable intelligent surface,'' \emph{IEEE Trans.
  Veh. Technol.}, vol.~74, no.~1, pp. 1169--1179, 2025.

\bibitem{Clarke}
R.~H. Clarke, ``A statistical theory of mobile-radio reception,'' \emph{Bell
  Syst. Tech. J.}, vol.~47, no.~6, pp. 957--1000, 1968.

\bibitem{5672371}
E.~Başar, U.~Aygölü, E.~Panayirci, and H.~V. Poor, ``Space-time block coded
  spatial modulation,'' \emph{IEEE Trans. Commun.}, vol.~59, no.~3, pp.
  823--832, 2011.

\bibitem{965648}
M.~Tao and R.~S. Cheng, ``Differential space-time block codes,'' in \emph{Proc.
  IEEE Global Commun. Conf. (GLOBECOM)}, vol.~2, San Antonio, TX, USA, Nov.
  2001, pp. 1098--1102.

\end{thebibliography}

\end{document}